\def\eqref#1{Eq.~(\ref{#1})}
\def\1{\bm{1}}
\DeclareMathAlphabet{\mathsfit}{\encodingdefault}{\sfdefault}{m}{sl}
\SetMathAlphabet{\mathsfit}{bold}{\encodingdefault}{\sfdefault}{bx}{n}
\newtheorem{definition}{Definition}
\newtheorem{theorem}{Theorem}
\newtheorem{lemma}{Lemma}
\newtheorem{assumption}{Assumption}
\begin{document}

\title{Embedding-Aware Noise Modeling of Quantum Annealing}

\author{Seon-Geun Jeong}
\orcid{0000-0002-5926-6642}
\email{wjdtjsrms11@pusan.ac.kr}

\author{Mai Dinh Cong}
\email{cong.md@pusan.ac.kr}
\orcid{0009-0003-3730-8869}
\affiliation{Department of Information Convergence Engineering, Pusan National University, 46241 Busan, Republic of Korea}

\author{Dae-Il Noh}
\orcid{0000-0002-6586-5780}
\email{daeil.noh@kqchub.com}
\affiliation{Quantum AI Lab, Korea Quantum Computing Co., Ltd, 48058 Busan, Republic of Korea}

\author{Quoc-Viet Pham}
\orcid{0000-0002-9485-9216}
\email{daeil.noh@kqchub.com}
\affiliation{School of Computer Science and Statistics, Trinity College Dublin, Dublin 2, D02 PN40 Dublin, Ireland}

\author{Won-Joo Hwang}
\orcid{0000-0001-8398-564X}
\email{wjhwang@pusan.ac.kr}
\affiliation{School of Computer Science and Engineering, Pusan National University, 46241 Busan, Republic of Korea}

\maketitle

\begin{abstract}
Quantum annealing provides a practical realization of adiabatic quantum computation and has emerged as a promising approach for solving large-scale combinatorial optimization problems. However, current devices remain constrained by sparse hardware connectivity, which requires embedding logical variables into chains of physical qubits. This embedding overhead limits scalability and reduces reliability as longer chains are more prone to noise-induced errors. In this work, building on the known structural result that the average chain length in clique embeddings grows linearly with the problem size, we develop a mathematical framework that connects embedding-induced overhead with hardware noise in D-Wave’s Zephyr topology. Our analysis derives closed-form expressions for chain break probability and chain break fraction under a Gaussian control error model, establishing how noise scales with embedding size and how chain strength should be adjusted with chain length to maintain reliability. Experimental results from the Zephyr topology-based quantum processing unit confirm the accuracy of these predictions, demonstrating both the validity of the theoretical noise model and the practical relevance of the derived scaling rule. Beyond validating a theoretical model against hardware data, our findings establish a general embedding-aware noise framework that explains the trade-off between chain stability and logical coupler fidelity. Our framework advances the understanding of noise amplification in current devices and provides quantitative guidance for embedding-aware parameter tuning strategies.
\end{abstract}

\section{Introduction}
Quantum computing is widely recognized as a promising technology across diverse application domains~\cite{jeong2025hybridquantumneuralnetworks, moon2025qsegrnn, 10981540, 10907925, 11177505, 11091511}.
In particular, quantum annealing (QA) is a practical realization of adiabatic quantum computation (AQC)~\cite{albash2018adiabatic}, leveraging quantum tunneling and thermal effects to explore complex energy landscapes~\cite{morita2008mathematical}. While AQC provides a theoretical guarantee of convergence under sufficiently slow evolution, QA enables implementation in near-term devices and has been commercialized by D-Wave Systems~\cite{dwave, johnson2011quantum, king2022coherent}. Through successive hardware generations such as Chimera, Pegasus, and most recently Zephyr~\cite{boothby2020next, Boothby2021Zephyr, dattani2019pegasus, zbinden2020embedding}, D-Wave has steadily increased the number of qubits and improved connectivity, making QA a promising approach for solving combinatorial optimization problems on real hardware~\cite{10907925}.

Despite this progress, a persistent challenge lies in the embedding process. Sparse hardware connectivity requires logical problem graphs to be mapped onto physical qubits through chains, causing an overhead that reduces both scalability and solution reliability~\cite{pelofske20244, fang2020minimizing}. Clique embeddings, which encode fully connected graphs, have become a standard benchmark for probing hardware capacity because they capture the worst-case embedding requirement~\cite{boothby2016fast}. Prior empirical studies have measured clique sizes, approximation ratios, chain break fractions (CBFs), and time-to-solution across hardware generations~\cite{pelofske2025comparing, konz2021embedding}. Other works have characterized chain breaking phenomena, identifying empirical sweet spots for intra-chain couplings and correlations between chain break locations and embedding structures~\cite {grant2022benchmarking}. While these studies have provided valuable insights, they are fundamentally empirical and do not yield predictive models of how embedding overhead interacts with hardware noise. Recent analyses of embedding overhead scaling further emphasize that minor embeddings impose intrinsic limits on device capacity~\cite{konz2021embedding}.

While previous works have shown that chain length grows linearly with the grid parameter $m$~\cite{boothby2020next, Boothby2021Zephyr}, the implications for noise amplification and chain stability have not been rigorously quantified. Prior studies on minor embedding and clique embeddings~\cite{boothby2016fast, venturelli2015quantum, choi2008minor, choi2011minor} primarily focused on structural aspects, such as chain length and the effect of intra-chain coupling strength. However, chain breaking is ultimately a noise-driven phenomenon, and thus requires a statistical error model to move beyond purely structural analyses. The integrated control error (ICE) model, commonly introduced in D-Wave documentation, provides a convenient statistical description of local field and coupler errors~\cite{dwave}, while decoherence analyses in QA suggest Gaussian perturbations as a reasonable approximation~\cite{albash2015decoherence}. However, ICE has not been rigorously validated in the context of embedding overhead, nor has it been extended into a predictive framework. In particular, the relationship between chain length growth, control error propagation, and observable chain break statistics remains unexplored. This gap is critical because without a predictive model, it would be very challenging to anticipate how embedding overhead will constrain scalability or to design principled noise-mitigation strategies as the problem size increases~\cite{shingu2024quantum, raymond2025quantum}.

Our framework addresses this limitation by moving beyond empirical benchmarking to a rigorous analytic treatment, linking structural embedding properties to quantifiable chain break statistics. Specifically, we derive and experimentally validate closed-form scaling laws for chain break probability (CBP) and CBF. Predicting CBF is especially valuable, as it directly connects embedding overhead to the practical reliability of QA, enabling scalability assessments, parameter tuning (e.g., intra-chain coupling strength, annealing time), and embedding-aware QA simulation~\cite{grant2022benchmarking, le2023benchmarking, bando2021simulated}. The alignment between theoretical predictions and experimental results not only explains observed chain break statistics but also provides a foundation for embedding-aware design principles and forward-looking scalability analysis of future quantum annealers.

The contributions of this paper are as follows: 
\begin{itemize}
    \item We develop an embedding-aware noise model that extends the ICE framework by explicitly linking chain length growth in clique embeddings with chain break statistics.
    \item We derive closed-form scaling laws for variance accumulation, CBP, and CBF, thereby establishing how embedding overhead directly translates into reliability degradation.
    \item We validate the proposed embedding-aware noise model through systematic experiments on the D-Wave Advantage2 processor (Zephyr topology) with clique embedding. The results show close concordance between theoretical predictions and hardware-observed CBF and energy statistics.
    \item We establish a theoretical scaling rule that prescribes how the chain strength should grow with embedding size to maintain reliability, and we empirically confirm this rule on hardware. At the same time, our experiments reveal a steeper but still sublinear growth than the ideal prediction, which we attribute to correlated hardware noise. This contrast between the idealized theoretical law and the practical scaling behavior highlights the importance of embedding-aware noise modeling.
    \item We position the resulting framework as a general embedding-aware noise model, offering predictive tools for assessing quantum annealer scalability and informing embedding-aware, hardware-conscious noise modeling strategies.
\end{itemize}

The remainder of this paper is organized as follows. Section~\ref{sec:background} reviews the background of QA and D-wave quantum processing unit (QPU) topology. Section~\ref{sec:scaling} develops the analytical scaling law for chain length growth in clique embeddings and derives closed-form expressions for CBP and CBF. Section~\ref{sec:experiments} presents experimental validation using D-Wave Advantage2 hardware. Section~\ref{sec:discussion} interprets the results by analyzing sensitivity to annealing schedules and chain strength, and further relates the proposed model to other physical noise channels beyond ICE. Finally, Section~\ref{sec:conclusion} concludes with a discussion of implications for scalability and embedding-aware noise mitigation in future quantum annealers.

\section{Background}\label{sec:background}

\subsection{Quantum Annealing}

The QA is a heuristic realization of AQC, where the objective is to minimize an Ising Hamiltonian that encodes a combinatorial optimization problem~\cite{kadowaki1998quantum,das2008colloquium}. The time-dependent Hamiltonian is defined as
\begin{equation}\label{eq:qaorigin}
    H(t) = A(t) \sum_i \sigma_i^x + B(t) \left( \sum_i h_i \sigma_i^z + \sum_{i<j} J_{ij} \sigma_i^z \sigma_j^z \right),
\end{equation}
where $\sigma_i^x$ and $\sigma_i^z$ are Pauli operators, $h_i$ and $J_{ij}$ represent programmable the local field on qubit $i$ and the coupler between qubits $i$ and $j$, respectively, and $A(t), B(t)$ are annealing schedules satisfying $A(0) \gg B(0)$ and $A(T) \ll B(T)$. In the adiabatic limit, the system remains in its instantaneous ground state and ideally reaches the optimal solution of the encoded problem. In practice, finite annealing times, thermal fluctuations, and hardware noise yield a heuristic sampling process rather than guaranteed convergence.

\subsection{D-Wave QPU Topologies}

D-Wave quantum annealers implement QA using superconducting flux qubits arranged in sparse but structured topologies. The hardware connectivity has evolved from the Chimera graph (degree~$6$) to the Pegasus topology (degree~$15$) and most recently to the Zephyr topology (degree~$20$)~\cite{boothby2020next,Boothby2021Zephyr}. These topologies define how logical problem graphs can be embedded into the physical hardware. Figure~\ref{fig:zephyr_graphs} illustrates the Zephyr connectivity in both the real Advantage2 system and a toy instance.

\begin{figure}[!t]
\centering
\subfigure[Zephyr hardware graph (\textit{Advantage2\_system1.6}).]{
    \includegraphics[width=0.45\textwidth]{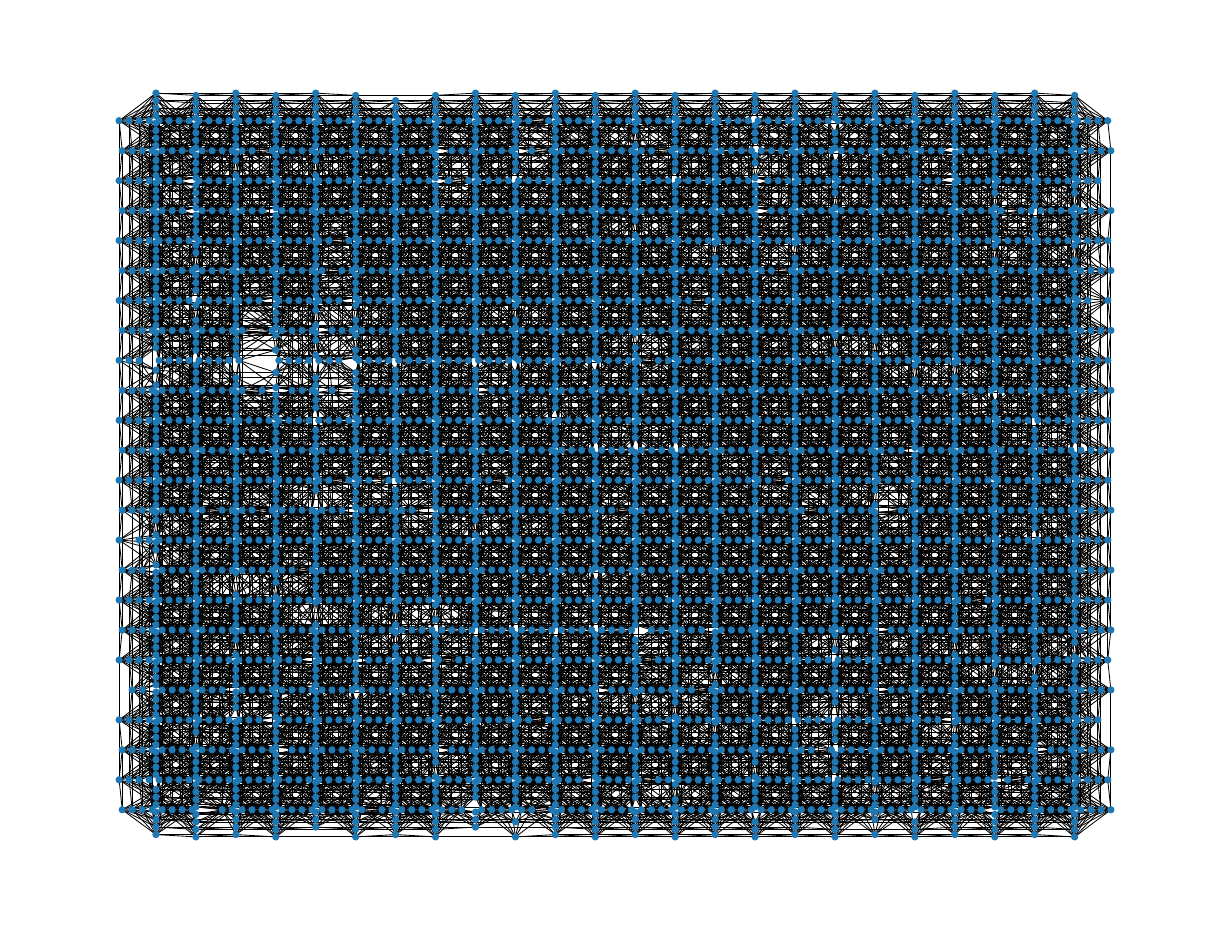}
    \label{fig:zephyr_real}
}
\subfigure[Zephyr toy graph ($m=1,\, t=4$).]{
    \includegraphics[width=0.45\textwidth]{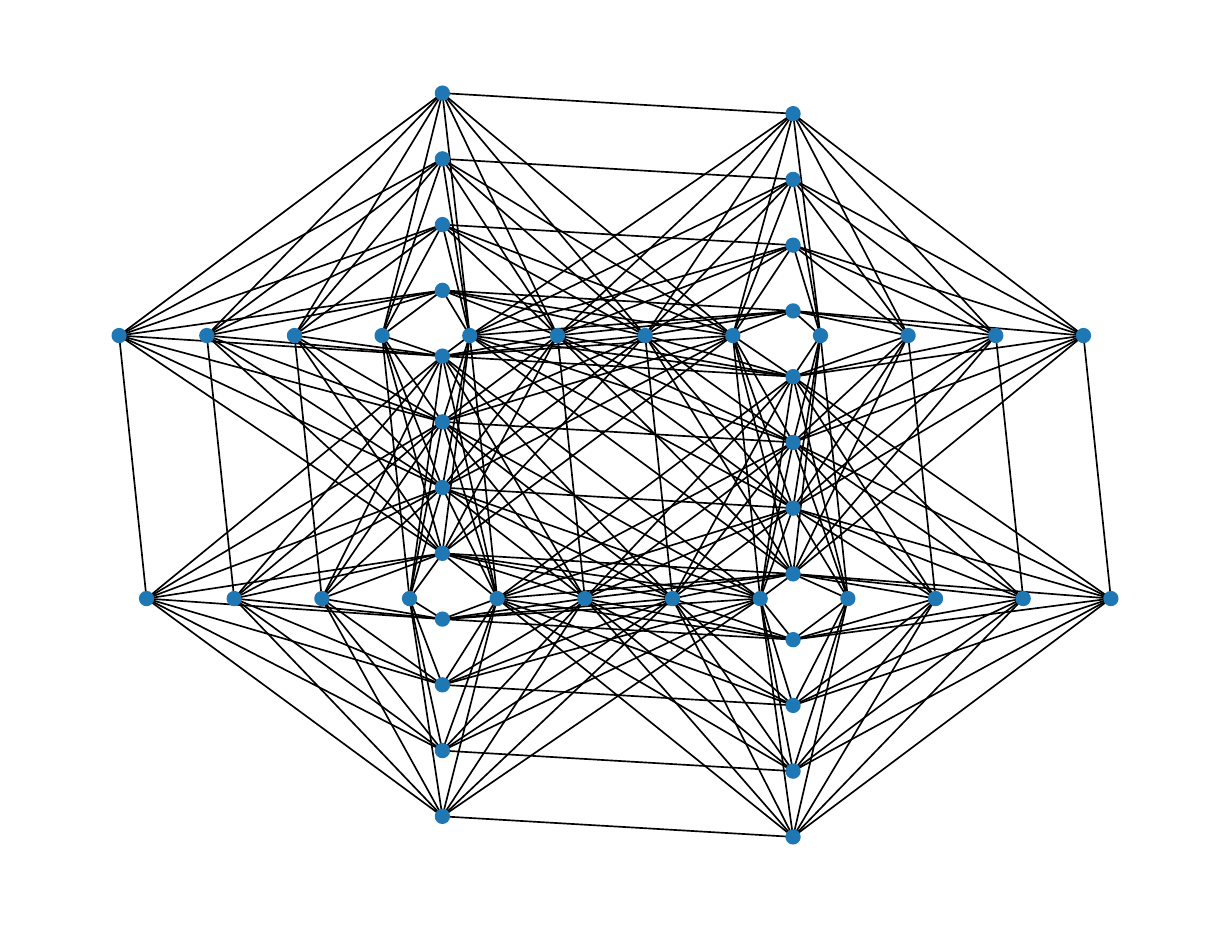}
    \label{fig:zephyr_toy}
}
\caption{Illustration of Zephyr topology. (a) shows the actual hardware connectivity of the Advantage2\_system1.6 QPU, while (b) shows a toy Zephyr graph $Z_{1,4}$ for clarity of unit-cell structure.} 
\label{fig:zephyr_graphs}
\end{figure}

Formally, a Zephyr graph $Z_{m,t}$ with grid parameter $m$ and tile size $t$ contains
\begin{equation}
    |V(Z_{m,t})| = 8tm^2 + 4tm,
\end{equation}
qubits, each of maximum degree $4t+4$. For example, $Z_{15,4}$ contains $7440$ qubits. Each qubit is identified by coordinates $(u,w,k,j,z)$, where $u \in \{0,1\}$ denotes orientation (vertical or horizontal), $w$ and $z$ are block offsets, $k$ is a within-tile index, and $j$ indicates a shift~\cite{Boothby2021Zephyr}. Zephyr supports three classes of couplers: internal, connecting orthogonal qubits; external, connecting qubits aligned along rows or columns; and odd, connecting parallel qubits in adjacent rows or columns. This enhanced connectivity enables richer native subgraphs, including $K_4$ (complete graphs on four nodes) and $K_{8,8}$ (complete bipartite graphs with partitions of size eight) as substructures of the hardware graph.

Despite these improvements, the Zephyr topology remains far from fully connected, making minor embedding necessary for dense logical problems~\cite{pelofske20244}. In this process, each logical variable is represented by a chain of ferromagnetically coupled physical qubits. Clique embeddings, which realize complete graphs $K_n$ within the hardware, serve as a standard benchmark for embedding capacity~\cite{boothby2016fast,grant2022benchmarking}. In Zephyr, cliques up to $K_{16m-8}$ can be embedded, and the required chain length grows proportionally to $m$ as shown in the construction of Boothby et al.~\cite{Boothby2021Zephyr}. While increased connectivity shortens average chain length compared to Pegasus and Chimera, the linear scaling introduces a fundamental overhead: as the number of logical variables grows, chains become longer and thus more vulnerable to errors.

A chain break occurs when physical qubits within a chain disagree, preventing the chain from behaving as a single logical qubit. The probability of such breaks depends strongly on the chain strength $k$ (the intra-chain ferromagnetic penalty used to enforce chain alignment): if $k$ is too weak, chains frequently break; if $k$ is too strong, logical couplers $J_{ij}$ are suppressed, reducing problem fidelity~\cite{grant2022benchmarking,venturelli2015quantum}. Previous empirical studies have identified sweet spots for $k$ that balance chain stability and logical accuracy, but these analyses remain heuristic.

In summary, prior work has established the structural embedding capacity of Zephyr, demonstrating that chain length scales linearly with the problem size. However, the implications of such growth for noise accumulation and chain stability have not been mathematically analyzed. This motivates our study, which links embedding-induced chain length scaling with error propagation and develops predictive scaling laws for chain reliability under realistic noise models.

\section{Analytical Scaling Law}\label{sec:scaling}
In this section, we derive a mathematical framework that connects embedding-induced overhead to noise amplification. 
We begin by introducing the setup and assumptions, including the Gaussian error model and a chain stability criterion. 
Under the assumption that local field and coupler errors are independent Gaussian perturbations, the variance of accumulated errors grows linearly with the chain length. 
Building on this variance law, we approximate the CBP using Gaussian tail probabilities and relate it to the CBF. 
Finally, we express these results in terms of the Zephyr grid parameter $m$, highlighting the scalability limitations and their implications for selecting the chain strength.

\subsection{Embedding and Assumption}
We first describe the embedding setup and introduce the assumptions that underlie our analysis. 
These include the ICE-based Gaussian noise model for local control errors, the approximate uniformity of chain lengths in clique embeddings, 
and the criterion for declaring a chain break. 
Together, these assumptions provide the foundation for deriving variance growth and break probabilities.

\begin{definition}[Minor Embedding and Chains]
Let $G_L = (V_L, E_L)$ denote the logical graph and $G_P = (V_P, E_P)$ denote the hardware graph.  
A \emph{minor embedding} is a mapping $\phi : V_L \to \{T_i \subseteq V_P\}$ such that each logical vertex $i \in V_L$ is represented by a connected, nonempty subset of physical qubits $T_i$, referred to as a \emph{chain}.  
Each logical edge $(i,j) \in E_L$ is realized by at least one physical edge between a qubit in $T_i$ and a qubit in $T_j$, and the chains are pairwise disjoint, i.e., $T_i \cap T_j = \emptyset$ for all $i \neq j$.  
The chain length associated with vertex $i$ is $\ell_i := |T_i|$, and $\bar{\ell}$ denotes the average chain length across all logical vertices.
\end{definition}

\begin{definition}[Embedded Hamiltonian~\cite{grant2022benchmarking}]\label{def:emHamil}
Given logical fields $\{h_i\}$ and couplers $\{J_{ij}\}$, the embedded Ising Hamiltonian programmed onto the hardware is
\begin{equation}
    H^* \;=\; -\sum_{p\in V_P} h^*_{p}\,\sigma^z_p 
              \;-\; \sum_{(p,q)\in E_P} J^*_{pq}\,\sigma^z_p\sigma^z_q,
\end{equation}
with physical parameters
\begin{equation}
    h^*_{p}=\frac{h_i}{|T_i|}\;\;(p\in T_i), \qquad 
    J^*_{pq}=\begin{cases}
        \frac{J_{ij}}{|\mathrm{edges}(T_i,T_j)|}, & p\in T_i,\,q\in T_j,\, i\neq j,\\[4pt]
        k, & p,q\in T_i \;\;(\text{intra-chain penalty}),\\
        0, & \text{otherwise}.
    \end{cases}
\end{equation}
Here, $k>0$ is the chain strength, i.e., the intra-chain ferromagnetic penalty enforcing chain consistency. 
\end{definition}

\begin{assumption}[ICE-based Noise Model] \label{Ass:NoiseModel}
While $H^*$ denotes the programmed Hamiltonian, 
the actual Hamiltonian realized on the QPU is subject to ICE~\cite{dwave,yarkoni2022quantum}. 
That is, instead of $H^*$, the device effectively implements
\begin{equation}
    H^{\delta} = - \sum_{p\in V_P} (h^*_{p} + \delta h_p)\,\sigma_p^z 
                 - \sum_{(p,q)\in E_P} (J^*_{pq} + \delta J_{pq})\,\sigma_p^z \sigma_q^z,
    \label{eq:ice-hamiltonian}
\end{equation}
where $\delta h_p$ and $\delta J_{pq}$ denote deviations between programmed and realized parameters. 
For analytical tractability, we restrict attention to the misspecification component of ICE and approximate the perturbations as independent zero-mean Gaussians:
\begin{equation}
    \delta h_p \sim \mathcal{N}(0,\sigma_h^2), 
    \qquad 
    \delta J_{pq}\sim \mathcal{N}(0,\sigma_c^2).
\end{equation}
This simplified Gaussian model has been widely used in prior analyses of decoherence and control noise~\cite{albash2015decoherence}.
\end{assumption}

\begin{assumption}[Uniform Chain Length for Cliques]\label{ass:uniform}
For clique embeddings on Zephyr at grid parameter $m$, chain lengths are approximately uniform:
\begin{equation}
    \ell_i \;\approx\; \alpha m + \beta,
\end{equation}
where $\alpha$ and $\beta$ are topology-dependent constants.
\end{assumption}
This assumption is motivated by the construction of clique embeddings, which produce nearly uniform chain lengths up to small boundary variations~\cite{Boothby2021Zephyr,boothby2016fast}.

\begin{assumption}[Failure Criterion]\label{ass:failure}
A chain $T_i$ breaks if the accumulated effective error $\Delta(\ell_i)$ exceeds a stabilizing margin $k_{\mathrm{eff}}=\eta k$, with proportionality constant $\eta\in(0,1]$:
\begin{equation}
    \text{break if } |\Delta(\ell_i)| > k_{\mathrm{eff}}.
\end{equation}
\end{assumption}
This assumption captures the idea that chain stability is maintained 
as long as the chain strength $k$ 
dominates error perturbations, consistent with prior empirical observations of sweet spots for $k$~\cite{grant2022benchmarking,venturelli2015quantum}.

\subsection{Variance Scaling}
Having defined the noise model and break criterion, we next analyze how control errors accumulate along a chain. We prove that the variance of the total chain-level error grows linearly with chain length, formalizing the intuition that longer chains collect proportionally more noise.

Each logical variable is encoded as a chain of ferromagnetically coupled qubits. 
In practice, every qubit bias $h_p$ and intra-chain coupler $J_{pq}$ is affected by small control errors $\delta h_p$ and $\delta J_{pq}$. 
The quantity $\Delta(\ell_i)$ aggregates all such errors acting on a chain, and therefore represents the net perturbation competing against the stabilizing chain strength. 
If $\Delta(\ell_i)$ becomes comparable to or larger than $k$, qubits in the chain may disagree, leading to a chain break. 
Therefore, $\Delta(\ell_i)$ can be interpreted as the effective noise load applied to a logical variable due to hardware imperfections.
\begin{definition}[Chain-Level Control Error]
For a chain $T_i$ of length $\ell_i$, define the accumulated error as
\begin{equation}
    \Delta(\ell_i) \;=\; \sum_{p\in T_i}\delta h_p \;+\; \sum_{(p,q)\in E_P\cap(T_i\times T_i)} \delta J_{pq}.
\end{equation}
\end{definition}

\begin{theorem}[Variance Scaling]\label{the:var}
Under independent zero-mean Gaussian in Assumption~\ref{Ass:NoiseModel}, the variance of $\Delta(\ell_i)$ grows linearly with $\ell_i$:
\begin{equation}\label{the:var_scailing}
    \mathrm{Var}[\Delta(\ell_i)] \;\approx\; \ell_i\sigma_h^2 + (\ell_i-1)\sigma_c^2.
\end{equation}
\end{theorem}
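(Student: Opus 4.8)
The plan is to compute the variance directly by exploiting the independence of all the Gaussian perturbations entering the definition of $\Delta(\ell_i)$. First I would observe that $\Delta(\ell_i) = \sum_{p \in T_i} \delta h_p + \sum_{(p,q) \in E_P \cap (T_i \times T_i)} \delta J_{pq}$ is a sum of independent zero-mean random variables, so by linearity of variance over independent summands,
\begin{equation}
    \mathrm{Var}[\Delta(\ell_i)] = \sum_{p \in T_i} \mathrm{Var}[\delta h_p] + \sum_{(p,q)} \mathrm{Var}[\delta J_{pq}] = \ell_i \sigma_h^2 + \big|E_P \cap (T_i \times T_i)\big|\,\sigma_c^2,
\end{equation}
using $|T_i| = \ell_i$ and the fact that each $\delta h_p$ has variance $\sigma_h^2$ and each $\delta J_{pq}$ has variance $\sigma_c^2$ from Assumption~\ref{Ass:NoiseModel}.

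Next I would account for the number of intra-chain couplers. Since a chain $T_i$ is by definition a \emph{connected} subset of the hardware graph (Definition~1), the induced subgraph on $T_i$ contains at least a spanning tree, which has exactly $\ell_i - 1$ edges; if the chain is realized as a simple path — the canonical situation in clique embeddings — then it has exactly $\ell_i - 1$ intra-chain couplers. Taking $|E_P \cap (T_i \times T_i)| \approx \ell_i - 1$ gives $\mathrm{Var}[\Delta(\ell_i)] \approx \ell_i \sigma_h^2 + (\ell_i - 1)\sigma_c^2$, which is the claimed expression. The ``$\approx$'' in the statement is precisely what absorbs the possibility that the embedding uses a few extra chords beyond a spanning path; I would remark that this only changes the coefficient of $\sigma_c^2$ by a bounded additive term and does not affect the linear-in-$\ell_i$ scaling.

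The main obstacle — really the only non-routine point — is justifying that the relevant edge count is $\ell_i - 1$ rather than something larger, and being clear that this is a modeling choice tied to the path-structured chains produced by the clique-embedding construction of Boothby et al., rather than a theorem about arbitrary minor embeddings. I would therefore state explicitly that the analysis treats each chain as a path (or, more weakly, uses its spanning tree), note that this is consistent with Assumption~\ref{ass:uniform}, and flag that correlated control noise — which would violate the independence hypothesis and is discussed later in the paper — is exactly where this clean linear law is expected to be modified in practice. Everything else is just linearity of variance for independent random variables, so the proof is short; the care goes into stating the structural assumption on chains cleanly.
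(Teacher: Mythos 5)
Your proof is correct and takes essentially the same approach as the paper: linearity of variance over the $\ell_i$ independent field errors and the intra-chain coupler errors. You are in fact slightly more careful than the paper, which simply asserts that there are $\ell_i-1$ coupler terms, whereas you justify this count via the path/spanning-tree structure of chains and correctly note that the ``$\approx$'' absorbs any extra chords.
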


\begin{proof}
$\Delta(\ell_i)$ is a sum of $\ell_i$ independent field errors and $\ell_i-1$ independent intra-chain coupler errors.  Each contributes variance $\sigma_h^2$ and $\sigma_c^2$, respectively. Under the assumption of independent Gaussian variables in Assumption \ref{Ass:NoiseModel}, we obtain the total variance as 
\begin{equation*}
    \mathrm{Var}[\Delta(\ell_i)] = \ell_i\sigma_h^2 + (\ell_i-1)\sigma_c^2.
\end{equation*}
\end{proof}
This result formalizes the idea that every additional qubit or coupler in a chain contributes an independent source of noise. 
As the chain grows longer, these contributions add in variance, leading to a linear increase in the total uncertainty. 
Thus, embedding overhead directly translates into noise overhead, which underpins the scaling of CBP derived in the next subsection.

\subsection{Chain Break Probability and Fraction}
The CBP characterizes the probability that a single embedded chain $T_i$ fails due to the accumulated control error exceeding the stabilizing chain strength $k$.
Similarly, the CBF measures the fraction of all embedded chains, each representing a logical variable, that break within a given embedding instance.
Therefore, CBP is a chain-level quantity determined by chain length and noise variance, while CBF is a global performance metric directly observable on hardware by counting broken chains across samples. 
\begin{definition}[Chain Break Probability]\label{def:cbp}
For a chain $T_i$ of length $\ell_i$, the CBP is
\begin{equation}
    \mathrm{CBP}(\ell_i) \;:=\; \Pr\big(|\Delta(\ell_i)| > k_{\mathrm{eff}}\big),
\end{equation}
where $\Delta(\ell_i)$ denotes the accumulated control error on $T_i$.
\end{definition}
\begin{definition}[Chain Break Fraction]\label{def:cbf}
For an embedding consisting of chains $\{T_1,\dots,T_{|V_L|}\}$, the CBF is
\begin{equation}
    \mathrm{CBF} := \frac{1}{|V_L|}\sum_{i=1}^{|V_L|}\mathbf{1}\{|\Delta(\ell_i)|>k_{\mathrm{eff}}\},
\end{equation}
where $\mathbf{1}\{\cdot\}$ denotes the indicator function, which returns $1$ if the condition inside is true and $0$ otherwise.
\end{definition}



By Definition~\ref{def:cbp}, CBP is the probability that the chain-level control error exceeds the stabilizing margin.
Approximating the accumulated error as Gaussian with variance $\mathrm{Var}[\Delta(\ell_i)]$ turns CBP into a Gaussian tail probability.

\begin{theorem}[Gaussian Tail Approximation of CBP]\label{th:gaussian-tail}
Under the Gaussian noise model,
\begin{equation}
    \mathrm{CBP}(\ell_i) \;\approx\; \operatorname{erfc}\!\left(\frac{k_{\mathrm{eff}}}{\sqrt{2\,\mathrm{Var}[\Delta(\ell_i)]}}\right),
\end{equation}
with $\mathrm{Var}[\Delta(\ell_i)] = \ell_i\sigma_h^2+(\ell_i-1)\sigma_c^2$.
\end{theorem}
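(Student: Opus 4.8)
The plan is to use the fact that, under Assumption~\ref{Ass:NoiseModel}, $\Delta(\ell_i)$ is a finite linear combination of independent zero-mean Gaussians and is therefore itself exactly Gaussian. First I would invoke stability of the Gaussian family under independent summation to write $\Delta(\ell_i)\sim\mathcal{N}\!\bigl(0,\,V\bigr)$ with $V:=\mathrm{Var}[\Delta(\ell_i)]$, where the variance is supplied directly by Theorem~\ref{the:var}, namely $V=\ell_i\sigma_h^2+(\ell_i-1)\sigma_c^2$. (Note that the ``$\approx$'' in the statement already enters here through this variance expression, since it assumes the chain carries exactly $\ell_i-1$ intra-chain couplers, i.e., that $T_i$ is a tree; any additional hardware edges inside $T_i$ would contribute further $\sigma_c^2$ terms. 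If one instead models the ICE perturbations only as zero-mean with finite variance rather than exactly Gaussian, the same distributional conclusion holds asymptotically in $\ell_i$ by the Lindeberg central limit theorem, which is an alternative sense in which the identity is an approximation.)

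Next I would combine Definition~\ref{def:cbp} with the failure criterion of Assumption~\ref{ass:failure} to write $\mathrm{CBP}(\ell_i)=\Pr\bigl(|\Delta(\ell_i)|>k_{\mathrm{eff}}\bigr)$, and then standardize: setting $Z:=\Delta(\ell_i)/\sqrt{V}\sim\mathcal{N}(0,1)$, the break event becomes $\{|Z|>k_{\mathrm{eff}}/\sqrt{V}\}$, so $\mathrm{CBP}(\ell_i)=\Pr\bigl(|Z|>k_{\mathrm{eff}}/\sqrt{V}\bigr)$. The remaining step is the elementary identity relating the two-sided standard-normal tail to the complementary error function: for $z\ge 0$, $\Pr(|Z|>z)=2\Pr(Z>z)=2\cdot\tfrac12\operatorname{erfc}(z/\sqrt2)=\operatorname{erfc}(z/\sqrt2)$, which I would verify from $\operatorname{erfc}(x)=\tfrac{2}{\sqrt\pi}\int_x^\infty e^{-u^2}\,du$ via the substitution $u=t/\sqrt2$ in $\Pr(Z>z)=\int_z^\infty e^{-t^2/2}\,dt/\sqrt{2\pi}$. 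Substituting $z=k_{\mathrm{eff}}/\sqrt{V}$ then yields $\mathrm{CBP}(\ell_i)=\operatorname{erfc}\!\bigl(k_{\mathrm{eff}}/\sqrt{2V}\bigr)$, i.e.\ the claim with $V=\ell_i\sigma_h^2+(\ell_i-1)\sigma_c^2$.

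There is no genuine analytic obstacle here; the derivation is a standardization followed by a change of variables. The only real care needed is bookkeeping: keeping the factor of $2$ from the absolute-value (two-sided) event and the $\sqrt2$ that converts between the Gaussian density $e^{-t^2/2}$ and the $e^{-u^2}$ inside $\operatorname{erfc}$, so that the $2$ and the $\tfrac12$ cancel to leave a bare $\operatorname{erfc}$ rather than $2\operatorname{erfc}$ or $\tfrac12\operatorname{erfc}$. The substantive content lies not in the calculation but in the hypotheses being assumed — the Gaussianity of ICE noise (Assumption~\ref{Ass:NoiseModel}), the reduction of ``chain break'' to the single scalar threshold crossing $|\Delta(\ell_i)|>k_{\mathrm{eff}}$ (Assumption~\ref{ass:failure}), and the linear variance law (Theorem~\ref{the:var}) — all of which are taken as given, so the ``$\approx$'' should be read as flagging these modeling idealizations rather than an error term to be bounded.
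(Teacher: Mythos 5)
Your proposal is correct and follows essentially the same route as the paper's own proof: identify $\Delta(\ell_i)$ as a zero-mean Gaussian with variance $\ell_i\sigma_h^2+(\ell_i-1)\sigma_c^2$, reduce $\mathrm{CBP}$ to a two-sided Gaussian tail, and convert it to $\operatorname{erfc}$ via the substitution $u=t/(\sqrt2\,\sigma)$. Your added remarks on where the ``$\approx$'' actually enters (the tree assumption on intra-chain couplers and the CLT fallback for non-Gaussian perturbations) go slightly beyond what the paper states but do not change the argument.
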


\begin{proof}\label{pr:1}
From Definition~\ref{def:cbp}, the CBP is the probability that the accumulated error exceeds the stabilizing margin:
\begin{equation}
    \mathrm{CBP}(\ell_i) = \Pr\!\big(|\Delta(\ell_i)| > k_{\mathrm{eff}}\big).
\end{equation}
By the Gaussian noise assumption, $\Delta(\ell_i)$ is well approximated as a normal random variable with zero mean and variance $\mathrm{Var}[\Delta(\ell_i)]$:
\begin{equation}\label{eq:ref1}
    \Delta(\ell_i) \sim \mathcal{N}\!\left(0,\,\mathrm{Var}[\Delta(\ell_i)]\right).
\end{equation}
Thus, evaluating CBP reduces to computing the tail probability of a centered Gaussian. 
For a general zero-mean Gaussian random variable $X \sim \mathcal{N}(0,\sigma^2)$, one has
\begin{equation}
    \Pr(|X| > a) = 2\Pr(X > a) = \frac{2}{\sqrt{2\pi\sigma^2}}\int_a^{\infty} \exp\!\left(-\frac{t^2}{2\sigma^2}\right)\,dt.
\end{equation}
The integral on the right-hand side is exactly the complementary error function, defined as
\begin{equation}\label{eq:complem}
    \operatorname{erfc}(x) := \frac{2}{\sqrt{\pi}}\int_x^{\infty} e^{-u^2}\,du.
\end{equation}
By a simple change of variables $u = t/(\sqrt{2}\sigma)$, the Gaussian tail probability can be rewritten as
\begin{equation}
    \Pr(|X| > a) = \operatorname{erfc}\!\left(\frac{a}{\sqrt{2}\sigma}\right).
\end{equation}
Finally, substituting $a = k_{\mathrm{eff}}$ and $\sigma^2 = \mathrm{Var}[\Delta(\ell_i)]$ yields
\begin{equation*}
    \mathrm{CBP}(\ell_i) \;\approx\; \operatorname{erfc}\!\left(\frac{k_{\mathrm{eff}}}{\sqrt{2\,\mathrm{Var}[\Delta(\ell_i)]}}\right).
\end{equation*}
This completes the proof.
\end{proof}

This theorem formalizes the intuition that longer chains are more likely to break: as $\ell_i$ grows, the variance $\mathrm{Var}[\Delta(\ell_i)]$ increases linearly, reducing the effective signal-to-noise ratio $k_{\mathrm{eff}}/\sqrt{\mathrm{Var}[\Delta(\ell_i)]}$ and thereby increasing the Gaussian tail probability. 
The erfc function captures this trade-off quantitatively. 
It is noteworthy that this closed-form expression enables direct comparison with experimental data and forms the basis for deriving scaling laws for the chain strength $k$ to maintain constant reliability.

CBF measures the proportion of broken chains in a given embedding instance. 
It provides an embedding-level metric that complements CBP, which is defined at the level of a single chain.

\begin{lemma}[Approximation of CBF by CBP]
By Definition~\ref{def:cbf}, if the chain lengths are approximately uniform, i.e., $\ell_i \approx \bar\ell$ for all $i$, then the expected CBF satisfies
\begin{equation}
    \mathbb{E}[\mathrm{CBF}] \;\approx\; \mathrm{CBP}(\bar\ell).
\end{equation}
\end{lemma}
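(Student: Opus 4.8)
The plan is to compute the expectation of the CBF by linearity and then invoke the uniform-chain-length assumption. First I would apply the expectation operator to the definition of CBF, so that $\mathbb{E}[\mathrm{CBF}] = \frac{1}{|V_L|}\sum_{i=1}^{|V_L|}\mathbb{E}\big[\mathbf{1}\{|\Delta(\ell_i)|>k_{\mathrm{eff}}\}\big]$. Since the expectation of an indicator is the probability of the event it indicates, each summand equals $\Pr(|\Delta(\ell_i)|>k_{\mathrm{eff}})$, which by Definition~\ref{def:cbp} is exactly $\mathrm{CBP}(\ell_i)$. This gives the exact identity $\mathbb{E}[\mathrm{CBF}] = \frac{1}{|V_L|}\sum_{i=1}^{|V_L|}\mathrm{CBP}(\ell_i)$, i.e., the expected chain-break fraction is the average of the per-chain break probabilities.

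Next I would use Assumption~\ref{ass:uniform} (or the hypothesis $\ell_i\approx\bar\ell$ stated in the lemma) together with the fact that $\mathrm{CBP}(\cdot)$, as given by the erfc expression in Theorem~\ref{th:gaussian-tail}, is a continuous (indeed monotone) function of the chain length. Substituting $\ell_i\approx\bar\ell$ into each term yields $\mathrm{CBP}(\ell_i)\approx\mathrm{CBP}(\bar\ell)$, so the average collapses to $\frac{1}{|V_L|}\cdot|V_L|\cdot\mathrm{CBP}(\bar\ell)=\mathrm{CBP}(\bar\ell)$, which is the claimed approximation. I would note that the error incurred is controlled by the spread of the $\ell_i$ around $\bar\ell$ and the Lipschitz constant of $\mathrm{CBP}$, both of which are small by the boundary-variation remark following Assumption~\ref{ass:uniform}.

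The only genuine subtlety — and the step I expect to flag rather than dwell on — is that linearity of expectation requires nothing about independence of the chain-break events, so the identity $\mathbb{E}[\mathrm{CBF}]=\frac{1}{|V_L|}\sum_i\mathrm{CBP}(\ell_i)$ holds regardless of correlations between chains; the approximation sign enters solely through the uniform-length step, not through any probabilistic coupling. It is worth stating this explicitly so the reader does not conflate the (exact) averaging identity with the (approximate) uniform-length reduction. A one-line version of the argument suffices:
\begin{equation*}
    \mathbb{E}[\mathrm{CBF}] = \frac{1}{|V_L|}\sum_{i=1}^{|V_L|}\Pr\big(|\Delta(\ell_i)|>k_{\mathrm{eff}}\big) = \frac{1}{|V_L|}\sum_{i=1}^{|V_L|}\mathrm{CBP}(\ell_i) \;\approx\; \mathrm{CBP}(\bar\ell),
\end{equation*}
where the last step uses $\ell_i\approx\bar\ell$ and continuity of $\mathrm{CBP}$ in its argument.
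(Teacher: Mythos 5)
Your proposal is correct and follows exactly the paper's own argument: linearity of expectation turns each indicator into $\mathrm{CBP}(\ell_i)$, and the uniform-chain-length hypothesis collapses the average to $\mathrm{CBP}(\bar\ell)$. The additional remarks on continuity of $\mathrm{CBP}$ and on independence being unnecessary are sound but do not change the route.
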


\begin{proof}
Since each term $\mathbf{1}\{|\Delta(\ell_i)|>k_{\mathrm{eff}}\}$ is a Bernoulli random variable with success probability $\mathrm{CBP}(\ell_i)$, taking expectations yields
\begin{equation}
    \mathbb{E}[\mathrm{CBF}] = \frac{1}{|V_L|}\sum_{i=1}^{|V_L|}\mathrm{CBP}(\ell_i).
\end{equation}
If $\ell_i\approx \bar\ell$ for all $i$, then $\mathrm{CBP}(\ell_i)\approx \mathrm{CBP}(\bar\ell)$, giving the result.
\end{proof}

\subsection{Scaling with Embedding Size}
Finally, we express CBP as a function of the Zephyr grid parameter $m$ using $\ell_i=\alpha m+\beta$ from Assumption~\ref{ass:uniform}.

\begin{theorem}[Scaling with Grid Parameter $m$]\label{th:cbp-m}
With $\ell_i=\alpha m+\beta$, the break probability scales as
\begin{equation}
    \mathrm{CBP}(m) \;\approx\; \operatorname{erfc}\!\left(\frac{k_{\mathrm{eff}}}{\sqrt{2\big((\alpha m+\beta)\sigma_h^2+(\alpha m+\beta-1)\sigma_c^2\big)}}\right).
\end{equation}
\end{theorem}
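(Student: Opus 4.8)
The plan is essentially a substitution argument: Theorem~\ref{th:cbp-m} follows by composing Theorem~\ref{th:gaussian-tail} with Assumption~\ref{ass:uniform}, with no new analytic content required. First I would invoke Theorem~\ref{th:gaussian-tail}, which gives the closed form $\mathrm{CBP}(\ell_i)\approx\operatorname{erfc}\!\big(k_{\mathrm{eff}}/\sqrt{2\,\mathrm{Var}[\Delta(\ell_i)]}\big)$ with $\mathrm{Var}[\Delta(\ell_i)]=\ell_i\sigma_h^2+(\ell_i-1)\sigma_c^2$ (itself a consequence of Theorem~\ref{the:var}). Then I would apply Assumption~\ref{ass:uniform}, which asserts $\ell_i\approx\alpha m+\beta$ for clique embeddings on Zephyr with topology-dependent constants $\alpha,\beta$. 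Substituting $\ell_i=\alpha m+\beta$ into the variance expression yields $\mathrm{Var}[\Delta(\ell_i)]\approx(\alpha m+\beta)\sigma_h^2+(\alpha m+\beta-1)\sigma_c^2$, and plugging this into the erfc argument gives exactly the claimed formula for $\mathrm{CBP}(m)$.

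The only mild subtlety — and the place where I would be slightly careful about wording rather than mathematics — is the bookkeeping of the $-1$ term: the number of intra-chain couplers in a connected chain $T_i$ of length $\ell_i$ is $\ell_i-1$ (a spanning tree suffices; if the chain induces extra physical edges one may take a tree of couplers as the penalty structure, consistent with Definition~\ref{def:emHamil}), so the coefficient becomes $\ell_i-1=\alpha m+\beta-1$ after substitution. I would note that all approximation symbols propagate faithfully: the Gaussian tail approximation ($\approx$) from Theorem~\ref{th:gaussian-tail} and the uniform-length approximation ($\approx$) from Assumption~\ref{ass:uniform} compose into a single $\approx$, since erfc is continuous and monotone, so a small perturbation in $\ell_i$ or in the Gaussianity of $\Delta(\ell_i)$ produces only a correspondingly small perturbation in the predicted CBP.

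I do not anticipate a genuine obstacle here; the result is a corollary in substance. If anything, the ``hardest'' part is purely expository: making clear that $m$ enters only through $\ell_i$, so that the monotone dependence of $\mathrm{CBP}$ on $m$ (increasing, since larger $m$ means larger variance and hence smaller signal-to-noise ratio $k_{\mathrm{eff}}/\sqrt{2\,\mathrm{Var}[\Delta(\ell_i)]}$) is manifest from the formula. I would close by remarking that this makes explicit the scalability bottleneck advertised in the introduction — to keep $\mathrm{CBP}(m)$ constant as $m\to\infty$ one must let $k_{\mathrm{eff}}$ (equivalently $k$, via $k_{\mathrm{eff}}=\eta k$) grow like $\sqrt{m}$ — which is the scaling rule the experimental section then tests.
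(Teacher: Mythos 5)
Your proposal is correct and follows exactly the paper's own proof: invoke Theorem~\ref{th:gaussian-tail} for the erfc form, use the variance expression from Theorem~\ref{the:var}, and substitute $\ell_i=\alpha m+\beta$ from Assumption~\ref{ass:uniform}. The additional remarks on the $\ell_i-1$ coupler count and on monotonicity in $m$ are consistent with the paper's surrounding discussion but are not needed for the proof itself.
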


\begin{proof}
From Theorem~\ref{th:gaussian-tail}, the CBP is
\begin{equation*}
    \mathrm{CBP}(\ell_i) \;\approx\; 
    \operatorname{erfc}\!\left(
    \frac{k_{\mathrm{eff}}}{\sqrt{2\,\mathrm{Var}[\Delta(\ell_i)]}}
    \right).
\end{equation*}
By Theorem~\ref{the:var}, the variance of accumulated error is
\begin{equation*}
    \mathrm{Var}[\Delta(\ell_i)] \;=\; \ell_i\sigma_h^2 + (\ell_i-1)\sigma_c^2.
\end{equation*}
Now substitute the clique embedding relation $\ell_i = \alpha m + \beta$ (Assumption~\ref{ass:uniform}):
\begin{equation}
    \mathrm{Var}[\Delta(\ell_i)] 
    \;=\; (\alpha m + \beta)\sigma_h^2 
          + (\alpha m + \beta - 1)\sigma_c^2.
\end{equation}
Plugging this into the CBP expression yields
\begin{equation*}
    \mathrm{CBP}(m) \;\approx\; 
    \operatorname{erfc}\!\left(
    \frac{k_{\mathrm{eff}}}{\sqrt{2\big((\alpha m+\beta)\sigma_h^2+(\alpha m+\beta-1)\sigma_c^2\big)}}
    \right).
\end{equation*}
\end{proof}

As the grid parameter $m$ (embedding size) increases, the average chain length grows linearly. Longer chains accumulate proportionally more Gaussian noise, which reduces the effective stability margin $k_{\mathrm{eff}}/\sqrt{\mathrm{Var}[\Delta(\ell_i)]}$ and increases the CBP. Thus, embedding overhead directly amplifies noise, linking hardware scaling to reliability degradation.

\subsection{Design Implication}
We derive a design rule for the intra-chain coupling strength $k$ that maintains a target break probability as the embedding grows.

\begin{theorem}[Optimal Scaling of $k$]\label{th:k-scaling}
Let $p^* \in (0,1)$ be a fixed target break probability.  
To maintain $\mathrm{CBP}(\ell_i)=p^*$ as the chain length $\ell_i$ increases, the intra-chain coupling strength must satisfy
\begin{equation}
    k(\ell_i) \;\sim\; \mathcal{O}(\sqrt{\ell_i}).
\end{equation}
In terms of the Zephyr grid parameter $m$, this scaling law can be expressed as
\begin{equation}
    k(m) \;\sim\; \mathcal{O}(\sqrt{m}).
\end{equation}
\end{theorem}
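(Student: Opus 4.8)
The plan is to invert the Gaussian tail expression from Theorem~\ref{th:gaussian-tail} and solve for $k$. Setting $\mathrm{CBP}(\ell_i)=p^*$ gives
\begin{equation*}
    \operatorname{erfc}\!\left(\frac{k_{\mathrm{eff}}}{\sqrt{2\,\mathrm{Var}[\Delta(\ell_i)]}}\right) = p^*,
\end{equation*}
and since $\operatorname{erfc}$ is continuous and strictly decreasing on $[0,\infty)$ it has a well-defined inverse there; writing $c^* := \operatorname{erfc}^{-1}(p^*)$, which is a positive constant depending only on the target $p^*$, the equation becomes
\begin{equation*}
    \frac{k_{\mathrm{eff}}}{\sqrt{2\,\mathrm{Var}[\Delta(\ell_i)]}} = c^*.
\end{equation*}

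Next I would substitute $k_{\mathrm{eff}}=\eta k$ from Assumption~\ref{ass:failure} and the variance law $\mathrm{Var}[\Delta(\ell_i)] = \ell_i\sigma_h^2 + (\ell_i-1)\sigma_c^2$ from Theorem~\ref{the:var}, then solve algebraically:
\begin{equation*}
    k(\ell_i) = \frac{c^*\sqrt{2}}{\eta}\,\sqrt{\ell_i\sigma_h^2 + (\ell_i-1)\sigma_c^2}.
\end{equation*}
Since $\eta$, $\sigma_h$, $\sigma_c$, and $c^*$ are fixed, the prefactor is a constant, and the radicand is $\ell_i(\sigma_h^2+\sigma_c^2) - \sigma_c^2$, which is $\Theta(\ell_i)$ as $\ell_i\to\infty$. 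Hence $k(\ell_i)=\mathcal{O}(\sqrt{\ell_i})$ (indeed $\Theta(\sqrt{\ell_i})$). Finally, applying the clique-embedding relation $\ell_i=\alpha m+\beta$ from Assumption~\ref{ass:uniform}, the radicand becomes $(\alpha m+\beta)(\sigma_h^2+\sigma_c^2)-\sigma_c^2 = \Theta(m)$, so $k(m)=\mathcal{O}(\sqrt{m})$, which is the claimed scaling.

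The argument is essentially a one-line inversion, so there is no serious analytic obstacle; the only point requiring a little care is justifying that $\operatorname{erfc}^{-1}(p^*)$ exists and is positive for $p^*\in(0,1)$ — this follows because $\operatorname{erfc}(0)=1$, $\operatorname{erfc}(x)\to 0$ as $x\to\infty$, and $\operatorname{erfc}$ is strictly monotone, so for $p^*\in(0,1)$ the preimage lies in $(0,\infty)$. One should also note the implicit assumption that $k$ (equivalently $k_{\mathrm{eff}}$) may be chosen freely as a function of $\ell_i$; in practice hardware constrains $k$ to a bounded range, so the result should be read as an asymptotic design rule rather than an exact prescription. It is worth remarking in the write-up that the $\mathcal{O}(\cdot)$ here is tight, i.e.\ $k$ must grow at least like $\sqrt{\ell_i}$ and this rate suffices, since the experimental section contrasts this ideal law against the steeper empirical growth.
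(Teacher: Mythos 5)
Your proposal is correct and follows essentially the same route as the paper's proof: fix the target probability, use the strict monotonicity of $\operatorname{erfc}$ to conclude that the ratio $k_{\mathrm{eff}}/\sqrt{2\,\mathrm{Var}[\Delta(\ell_i)]}$ must stay constant, and then invoke $\mathrm{Var}[\Delta(\ell_i)]=\mathcal{O}(\ell_i)$ together with $\ell_i=\alpha m+\beta$ to obtain $k=\mathcal{O}(\sqrt{\ell_i})=\mathcal{O}(\sqrt{m})$. Your write-up is in fact slightly more explicit than the paper's (giving the closed-form constant $c^{*}\sqrt{2}/\eta$ and noting the bound is tight, i.e.\ $\Theta(\sqrt{\ell_i})$), but the underlying argument is identical.
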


\begin{proof}
From Theorem~\ref{th:gaussian-tail}, the CBP is
\begin{equation*}
    \mathrm{CBP}(\ell_i) \;\approx\; 
    \operatorname{erfc}\!\left(
    \frac{k_{\mathrm{eff}}}{\sqrt{2\,\mathrm{Var}[\Delta(\ell_i)]}}
    \right).
\end{equation*}
Fixing $\mathrm{CBP}(\ell_i)=p^*$ requires that the ratio
remains constant.  
Since $\mathrm{Var}[\Delta(\ell_i)] = \mathcal{O}(\ell_i)$ by~\eqref{the:var_scailing}, 
it follows that $k_{\mathrm{eff}} = \mathcal{O}(\sqrt{\ell_i})$.  
Thus, $k(\ell_i) \sim \mathcal{O}(\sqrt{\ell_i})$, and equivalently $k(m) \sim \mathcal{O}(\sqrt{m})$.
\end{proof}

As the embedding size increases, the accumulated noise in each chain grows linearly with $\ell_i$. To preserve a fixed reliability level, the chain strength only needs to scale as the square root of the chain length (or grid parameter). This establishes a practical design guideline for selecting $k$ in large-scale embeddings.

In summary, our analysis establishes an embedding-aware noise model that explicitly incorporates the growth of chain length in clique embeddings. 
By modeling the accumulated chain-level error as a Gaussian variable with variance increasing linearly in chain length, we derive closed-form laws for CBP and CBF, and show how these scale with the Zephyr grid parameter $m$. 
Furthermore, we provide a design rule for the chain strength $k$, demonstrating that maintaining a constant reliability requires $k$ to scale only as $\sqrt{\ell_i}$ (or $\sqrt{m_i}$). 
This framework links hardware-level control errors to embedding overhead, yielding both predictive accuracy and practical design implications for large-scale QA.



\section{Results}\label{sec:experiments}
This section presents an experimental evaluation of the proposed embedding-aware noise model, including the setup, evaluation metrics, and results across randomly generated QUBO instances. All implementations are made publicly available on our GitHub repository https://github.com/JeongQC/Quantum.

\subsection{Experimental Settings}

\subsubsection{Data settings}
We evaluate the proposed embedding-aware noise model using random QUBO instances of size $L$ (number of logical variables). 
Diagonal terms are sampled from the continuous uniform distribution $\mathcal{U}(-1,1)$, 
and off-diagonal terms are sampled from $\mathcal{U}(-1,1)$ with probability $\rho$ (edge density):
\begin{equation*}
Q(i,i)\sim \mathcal{U}(-1,1),\qquad
Q(i,j)\sim \mathcal{U}(-1,1)\ \text{with probability }\rho \ \ (i<j).
\end{equation*}
Here $\rho$ controls the sparsity of the QUBO graph, where $\rho=1$ produces a fully connected instance. 
The generated QUBO is converted into a binary quadratic model (BQM), the standard quadratic form used in the D-Wave Ocean SDK~\cite{dwave}. 
Since the QPU natively solves ising Hamiltonians with spin variables $s_i\in\{-1,+1\}$, 
we apply the standard QUBO-to-ising mapping $x_i=(1+s_i)/2$ before execution. 
Unless otherwise stated, we set $(h_{\min},h_{\max})=(J_{\min},J_{\max})=(-1,1)$ and use $\rho=1.0$ (dense random QUBO).

\subsubsection{Implementation Details}

All algorithms are implemented in Python~3.8. For QA, we use the official D-Wave Ocean SDK libraries, and all experiments run on the D-Wave \textit{Advantage2\_system1.6} QPU (Zephyr topology). For each $L$, we compute a clique embedding of $K_L$ onto the Zephyr hardware graph using the \textit{minorminer} library, and then fix the found embedding via the \textit{FixedEmbeddingComposite}. 
The maximum embeddable clique size on this solver is $L_{\max}\simeq \,100$ (obtained via \textit{DWaveCliqueSampler}), and we sweep $L$ from 5 up to $L_{\max}$ in increments of~5. We run an annealing time sweep $T\in\{5,20,100,200\}\,\mu$s with a linear annealing schedule. For reproducibility, unless stated otherwise: QUBO seed $= L$ and embedding seed $= L$. 

To compare the solution quality, we adopt various classical baselines. As classical baselines, we implement simulated annealing (SA)~\cite{10907925, kirkpatrick1983optimization} using the \textit{dwave-neal} library, and mixed-integer linear program (MILP) approaches using PuLP (open-source classical solver)~\cite{santos2020mixed} and Gurobi (commercial MILP solver)~\cite{gurobi}. For SA, we sample $2000$ states with a logarithmic temperature schedule, and report the lowest obtained energy and runtime. For PuLP and Gurobi, we directly formulate the QUBO as an MILP and report the optimal objective values and wall-clock time. This setup enables a consistent comparison across QA, SA, PuLP, and Gurobi in terms of solution quality and runtime.

\textbf{Fitting}:
The goal of fitting is to calibrate the embedding-aware noise model to the observed hardware behavior. 
Specifically, we estimate the parameters $(\sigma_h,\sigma_c,\kappa)$, which represent the variance of local field errors, 
the variance of coupler errors, and the effective stabilizing margin ($\kappa=\eta k$). 
These parameters enter the analytical expression for the CBP of a chain of length~$\ell_i$:
\begin{equation*}
\mathrm{CBP}(\ell_i) \;\approx\; 
\operatorname{erfc}\!\left(
\frac{\kappa}{\sqrt{2\left(\ell_i\sigma_h^2+(\ell_i-1)\sigma_c^2\right)}}
\right).    
\end{equation*}
For an embedding with chain lengths $\{\ell_i\}$, the model-predicted CBF is then 
\begin{equation}
\overline{\mathrm{CBF}}_{\mathrm{pred}}(L) 
= \frac{1}{|V_L|}\sum_{i=1}^{|V_L|}\mathrm{CBP}(\ell_i).    
\end{equation}
We fit $(\sigma_h,\sigma_c,\kappa)$ by minimizing the sum of squared errors (SSEs) between observed and predicted mean CBF:
\begin{equation}
\text{SSE}(\sigma_h,\sigma_c,\kappa) 
= \sum_{L}\left(\overline{\mathrm{CBF}}_{\mathrm{obs}}(L) 
- \overline{\mathrm{CBF}}_{\mathrm{pred}}(L;\sigma_h,\sigma_c,\kappa)\right)^2.    
\end{equation}

The search ranges are motivated by prior ICE characterizations on D-Wave processors, 
which report typical control error magnitudes of about $|\delta h_i|\!\approx\!0.05$ 
and $|\delta J_{ij}|\!\approx\!0.02$ in Ising units~\cite{dwave, yarkoni2022quantum}. 
Given that our QUBO coefficients are drawn from $[-1,1]$, these values correspond to noise on the order of a few percent of the programmed range. 
Accordingly, we select $\sigma_h,\sigma_c\in[0.005,0.08]$, spanning roughly 0.5-8\% of the coefficient scale 
and thus covering both weaker and stronger noise scenarios observed in practice. 
For the effective stabilization margin, we search $\kappa\in[0.10,1.00]$, which is consistent with practical intra-chain coupling strengths on Advantage-class hardware. 
The step sizes (0.005 for $\sigma_h,\sigma_c$ and 0.05 for $\kappa$) balance resolution with computational tractability.


\subsection{Evaluation Metrics}
We use the following metrics to evaluate embedding-aware noise behavior and model fit:
\begin{itemize}
    \item CBF:  
    for each read, we obtain a CBF $\operatorname{cbf}\in[0,1]$ from the sampler.  
    Over $N$ reads, the observed mean is defined as 
    \begin{equation}
    \overline{\mathrm{CBF}}_{\mathrm{obs}}
    =\frac{1}{N}\sum_{n=1}^N \operatorname{cbf}_n.
    \end{equation} 

    \item Energy statistics:  
    from the Hamiltonian energies $\{E_n\}$, we report the best (minimum) energy for each problem size $L$ (number of logical variables) and annealing time $T$ (in $\mu$s), i.e., $E_{\min}$. 

    \item Fit quality:
    given fitted $(\sigma_h,\sigma_c,\kappa)$, we compute predicted mean model-predicted CBF ($\overline{\mathrm{CBF}}_{\mathrm{pred}}(L)$) 
    and evaluate the SSE:
    \[
    \text{SSE}=\sum_L\big(\overline{\mathrm{CBF}}_{\mathrm{obs}}(L)-\overline{\mathrm{CBF}}_{\mathrm{pred}}(L)\big)^2.
    \]
    We also visualize observed vs.\ predicted curves and per-$L$ absolute error.
\end{itemize}

\subsection{Research Questions}
We structure the experiments as follows. Here $L$ denotes the number of logical variables (clique size), $T$ is the annealing time (in $\mu$s), and $k$ represents the chain strength.

\begin{enumerate}
    \item[RQ1:] Chain length scaling under clique embeddings.  
    The first question is whether the average chain length $\bar\ell$ grows linearly with $L$. 
    Confirming this relationship validates the structural basis of the embedding-aware noise model, 
    since longer chains imply larger noise accumulation.

    \item[RQ2:] Embedding scaling of CBF at fixed $T$ and $k$.  
    Given that $\bar\ell$ increases linearly with $L$, the next question is whether the observed mean CBF 
    also increases accordingly, consistent with the variance growth predicted by the model. 
    This links structural scaling to the growth of CBP. 
    We examine observed $\overline{\mathrm{CBF}}_{\mathrm{obs}}(L)$ together with the predicted 
    $\overline{\mathrm{CBF}}_{\mathrm{pred}}(L)$ obtained from the fitted model parameters.

    \item[RQ3:] Schedule sensitivity when varying $T$ at fixed $k$.  
    Here, we ask how different annealing times $T\in\{5,20,100,200\}\,\mu$s influence the CBF scaling with $L$. 
    The aim is to determine whether the observed trends are robust across schedules and whether 
    the fitted parameters $(\sigma_h,\sigma_c,\kappa)$ remain stable across $T$.

    \item[RQ4:] Chain strength sensitivity when varying $k$ at fixed $T$.  
    This question concerns whether sweeping the chain strength $k\in [0.1,2.5]$ reproduces 
    the predicted $\operatorname{erfc}$ tails and the sublinear rule $k=\mathcal{O}(\sqrt{\bar{\ell}})$. 
    The purpose is to test whether modest increases in $k$ are sufficient to stabilize chains 
    without overly suppressing logical couplers.

    \item[RQ5:] Algorithmic comparison with classical baselines. In this question, we benchmark QA against SA, MILP, and Gurobi. All methods are applied to the same randomly generated QUBOs, and we compare the resulting energy quality and runtime scalability as $L$ increases. This analysis highlights whether QA provides competitive or complementary advantages relative to classical heuristics and exact solvers under embedding-aware settings.

\end{enumerate}

\subsection{RQ1: Embedding scaling (chain length vs number of logical variables)}

A central assumption of our embedding-aware noise model is that in clique embeddings, the average chain length $\bar\ell$ increases approximately linearly with the number of logical variables $L$. 
To confirm this structural property, we generated random QUBO instances of varying size and embedded them onto the 
D-Wave \textit{Advantage2\_system1.6} QPU using clique embeddings. 
\begin{figure}[!h]
    \centering
    \includegraphics[width=0.55\linewidth]{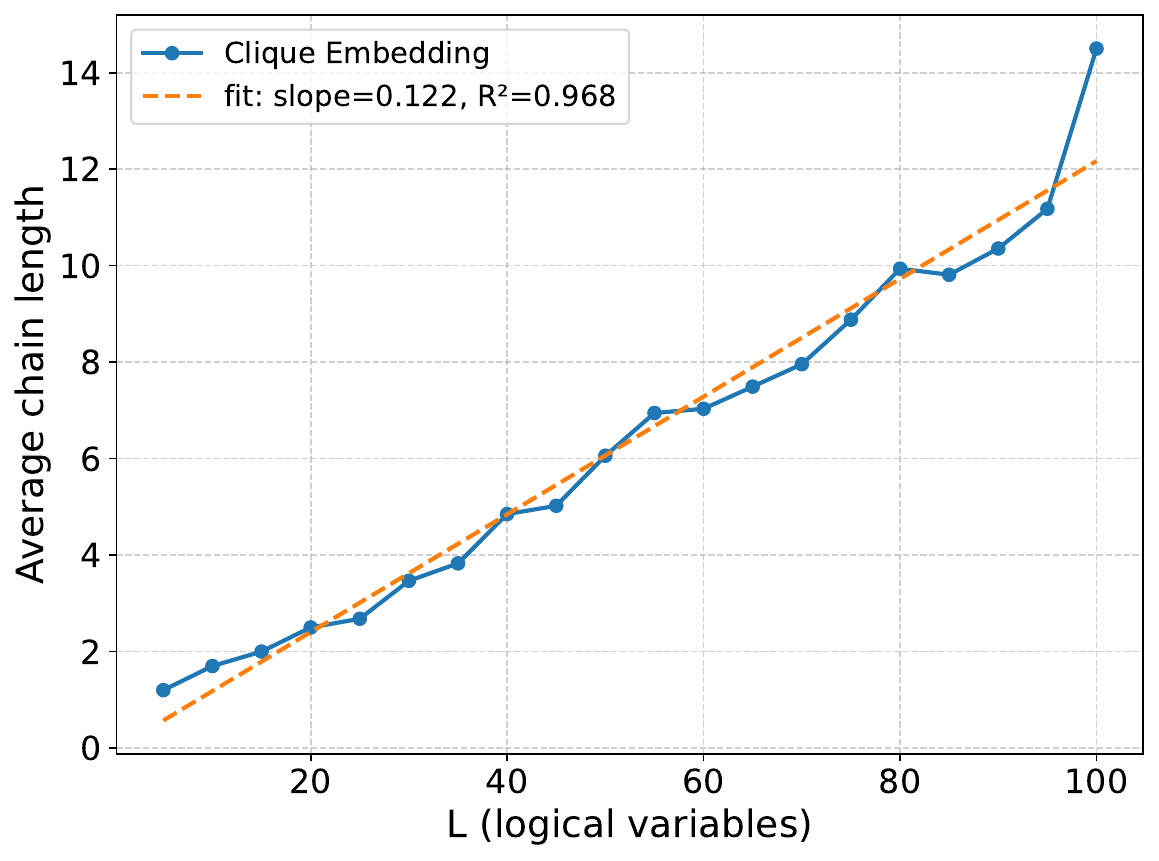}
    \caption{Average chain length as a function of logical problem size $L$ under clique embeddings on the D-Wave \textit{Advantage2\_system1.6} QPU.}
    \label{fig:chainlen}
\end{figure}

Figure~\ref{fig:chainlen} shows the observed average chain length as a function of $L$ together with a linear fit. 
The fitted line has slope $0.122$ with $R^2=0.968$, indicating an almost perfectly linear trend; equivalently, $\bar\ell$ increases by about $0.122$ per additional logical variable (roughly one unit per $\sim 8.2$ variables). 
This empirical observation validates the structural basis of our theory and provides the foundation for linking chain length growth to noise accumulation and, ultimately, to increasing CBFs.

\subsection{RQ2: Observed vs.\ predicted CBF at fixed settings ($T=5\,\mu$s, $k=1.0$)}

\begin{figure}[!h]
    \centering
    \includegraphics[width=0.55\linewidth]{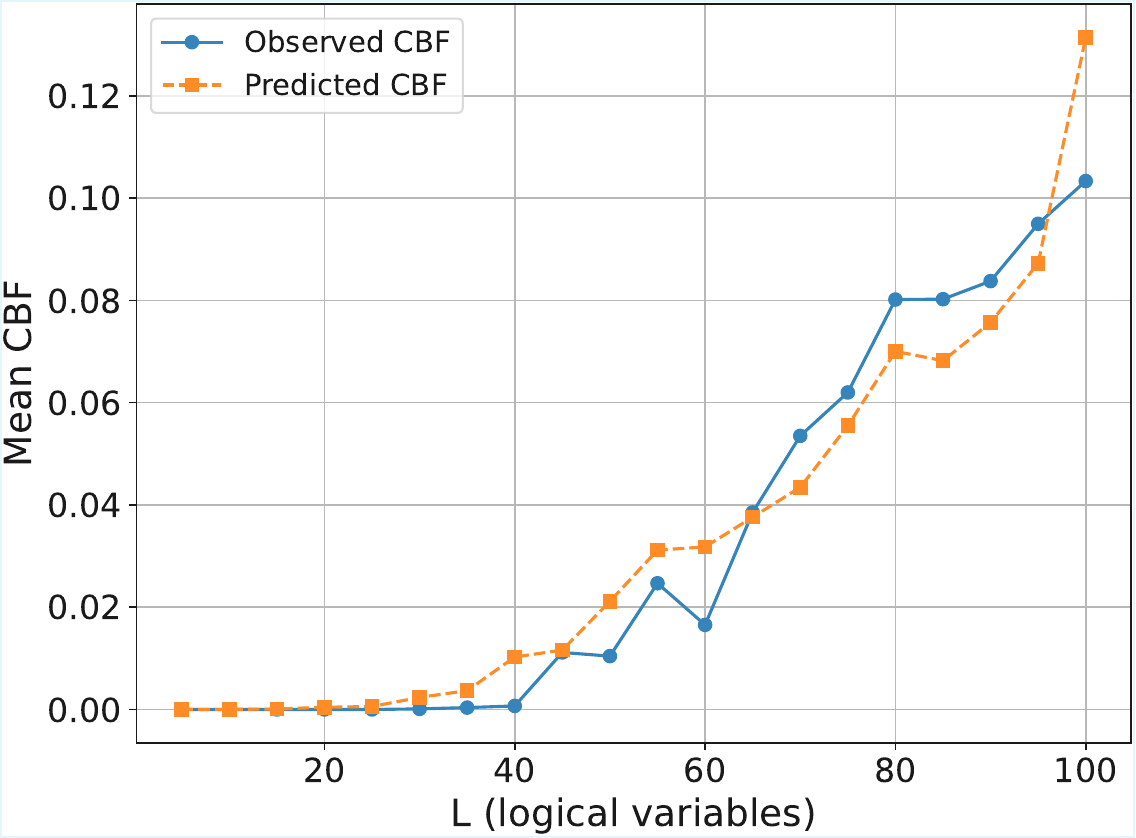}
    \caption{Observed mean CBF (blue) versus model-predicted CBF (orange) as a function of logical problem size $L$ 
    under clique embeddings on the D-Wave \textit{Advantage2\_system1.6} QPU. 
    Results are shown for annealing time $T=5\,\mu$s and chain strength $k=1.0$. 
    }
    \label{fig:cbf_obs_pred}
\end{figure}

\begin{table}[h]
\centering
\caption{Observed vs.\ predicted mean CBF at $T=5\,\mu$s and $k=1.0$. 
Here $|\Delta|$ denotes the absolute error 
$|\overline{\mathrm{CBF}}_{\text{obs}} - \overline{\mathrm{CBF}}_{\text{pred}}|$.}
\label{tab:fit_obs_pred}
\begin{tabular}{rccc}
\hline
$L$ & $\overline{\mathrm{CBF}}_{\text{obs}}$ & $\overline{\mathrm{CBF}}_{\text{pred}}$ & $|\Delta|$ \\
\hline
5   & 0.00000   & 5.3e-06   & 5.3e-06 \\
10  & 0.00000   & 1.9e-05   & 1.9e-05 \\
15  & 0.00000   & 6.5e-05   & 6.5e-05 \\
20  & 0.00000   & 3.3e-04   & 3.3e-04 \\
25  & 0.00048   & 4.36e-04  & 4.0e-05 \\
30  & 0.00013   & 1.85e-03  & 0.00172 \\
35  & 0.00000   & 2.92e-03  & 0.00292 \\
40  & 0.00076   & 8.44e-03  & 0.00768 \\
45  & 0.00690   & 9.62e-03  & 0.00272 \\
50  & 0.01056   & 0.01787   & 0.00731 \\
55  & 0.01454   & 0.02678   & 0.01224 \\
60  & 0.01411   & 0.02728   & 0.01316 \\
65  & 0.03384   & 0.03253   & 0.00131 \\
70  & 0.05027   & 0.03781   & 0.01246 \\
75  & 0.05440   & 0.04875   & 0.00564 \\
80  & 0.07276   & 0.06220   & 0.01056 \\
85  & 0.07718   & 0.06046   & 0.01672 \\
90  & 0.07351   & 0.06738   & 0.00613 \\
95  & 0.09064   & 0.07822   & 0.01242 \\
100 & 0.08680   & 0.12016   & 0.03336 \\
\hline
\end{tabular}
\end{table}

Figure~\ref{fig:cbf_obs_pred} and Table~\ref{tab:fit_obs_pred} show that the observed mean CBF increases consistently with problem size $L$, reflecting the accumulation of control errors as chain lengths grow. 
The embedding-aware noise model, fitted via $(\sigma_h,\sigma_c,\kappa)$, yields predicted CBF values that closely match the experimental observations across the full range of $L$. 
The best-fit parameters were $\sigma_h=0.06$, $\sigma_c=0.005$, and $\kappa=0.35$, with a $\text{SSE}=0.0021$. 
The absolute error $|\Delta| = |\overline{\mathrm{CBF}}_{\text{obs}} - \overline{\mathrm{CBF}}_{\text{pred}}|$ remains below $0.03$ for most instances, with only moderate deviations at the largest $L$ where chains are longest. 
These results confirm that a Gaussian misspecification model, combined with measured chain length distributions from clique embeddings, is sufficient to capture the dominant trend of CBF growth on the Advantage2 hardware.

\subsection{RQ3: Schedule sensitivity (vary $T$, fix $k$)}\label{res:RQ3}

\begin{figure}[!h]
\centering
\subfigure[$T=5\,\mu$s.]{%
\includegraphics[width=0.45\textwidth]{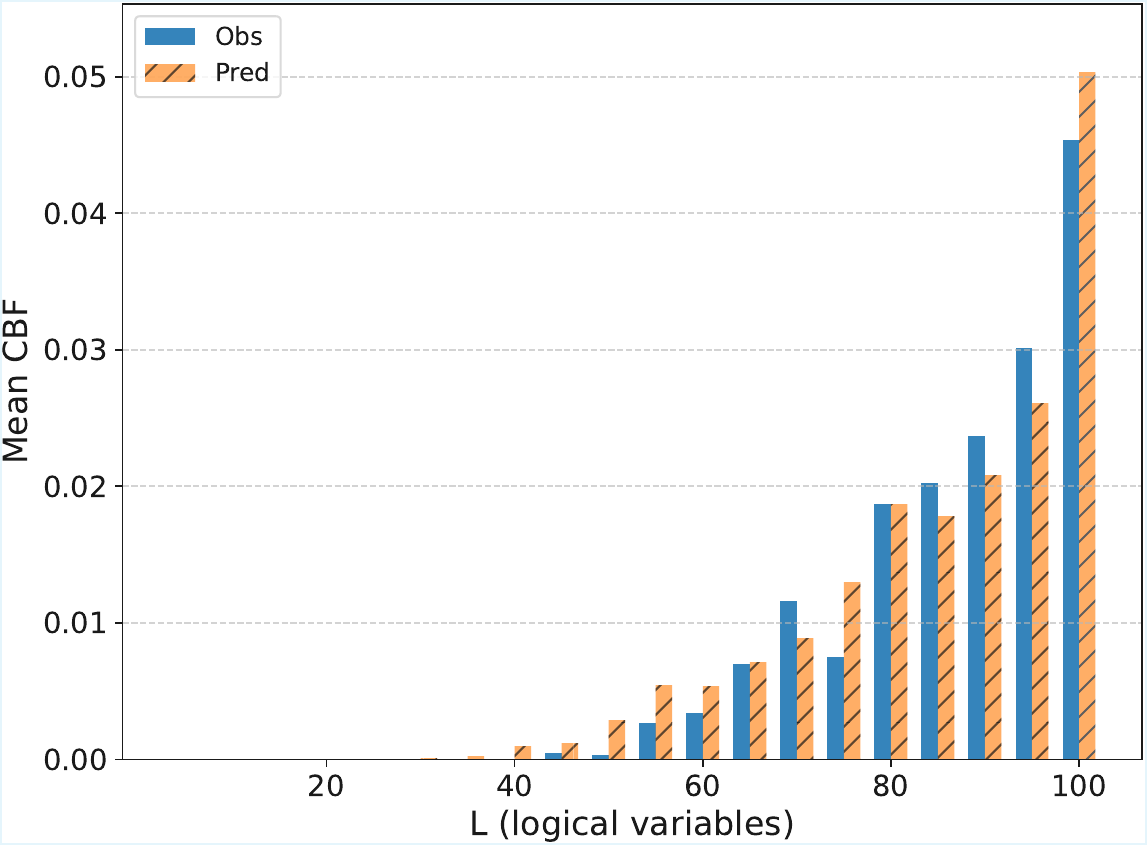}%
    \label{fig.predobs5}}
\subfigure[$T=20\,\mu$s.]{%
\includegraphics[width=0.45\textwidth]{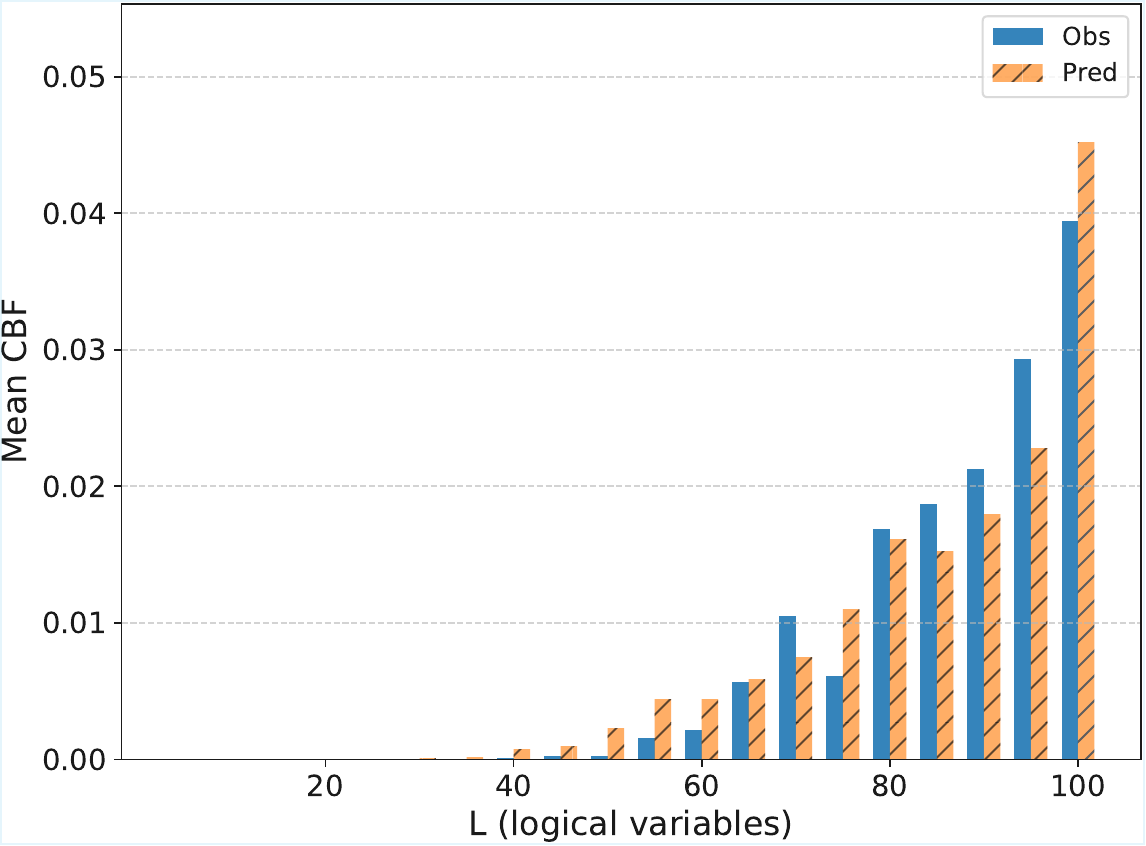}%
    \label{fig.predobs20}}
\subfigure[$T=100\,\mu$s.]{%
\includegraphics[width=0.45\textwidth]{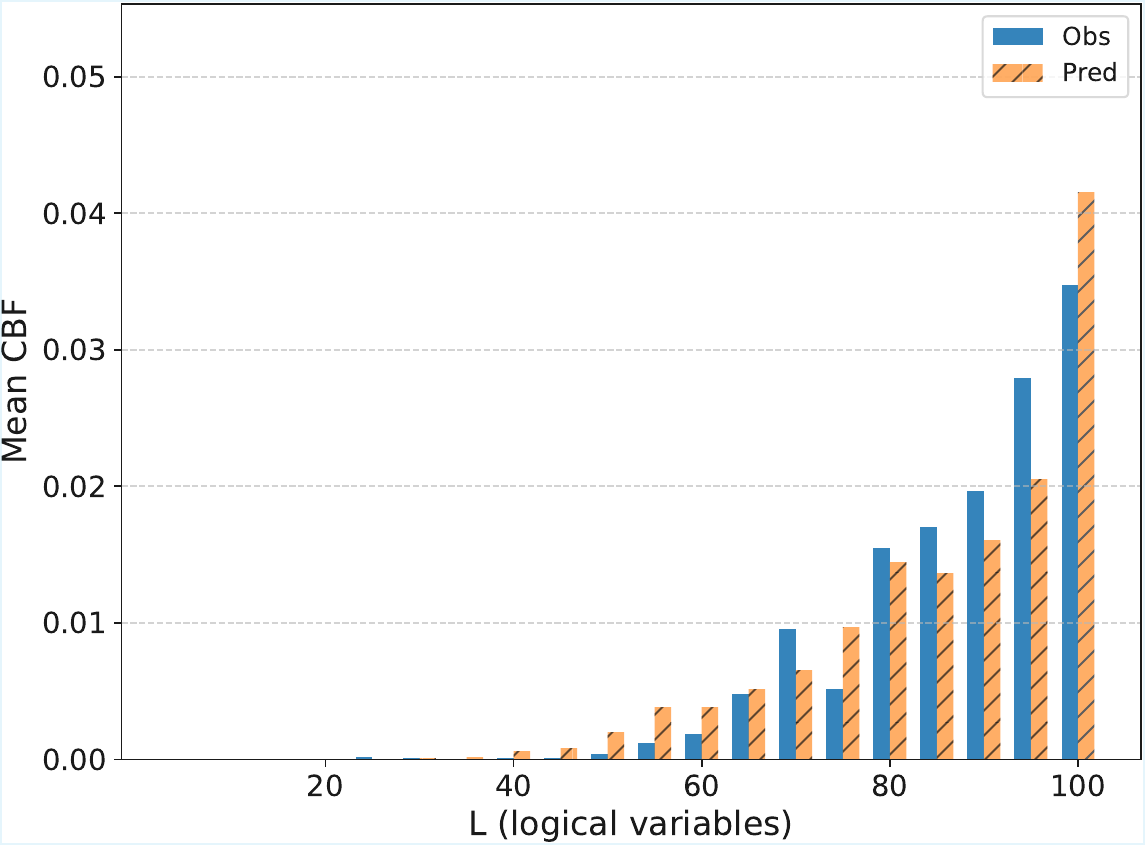}%
    \label{fig.predobs100}}
\subfigure[$T=200\,\mu$s.]{%
\includegraphics[width=0.45\textwidth]{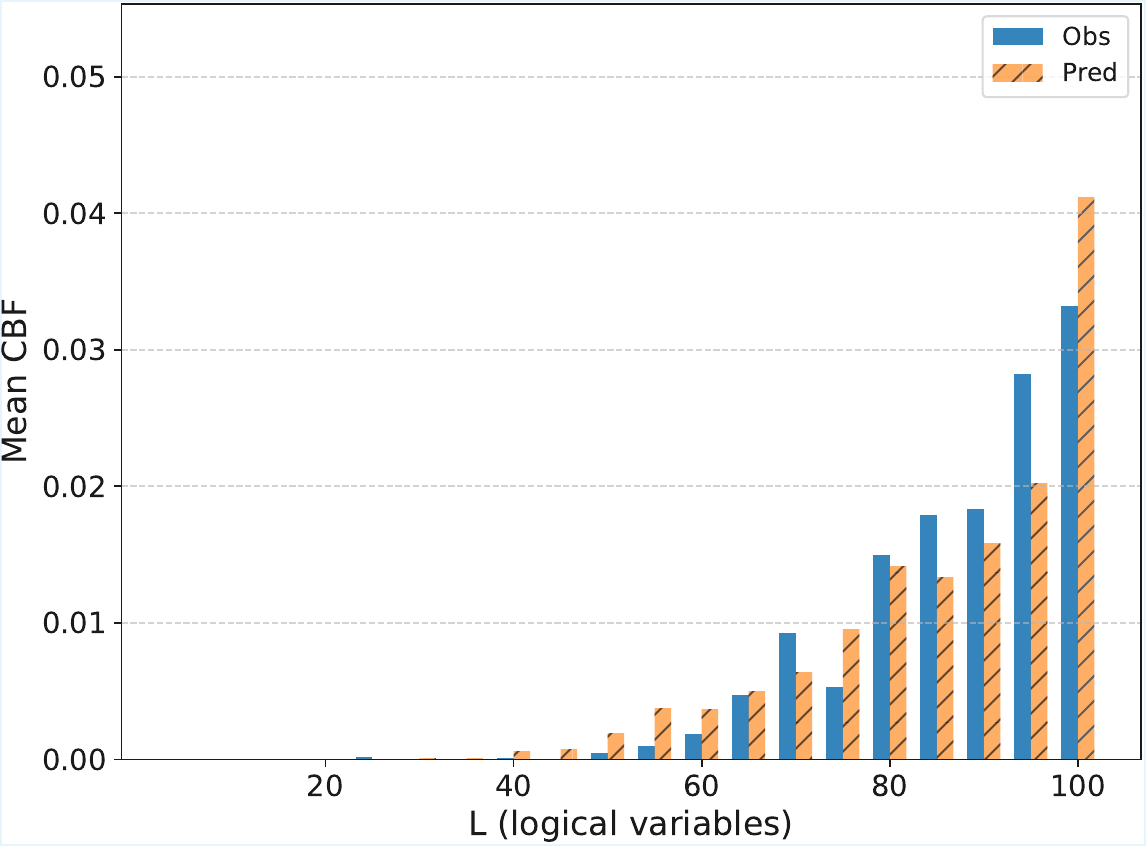}%
    \label{fig.predobs200}}
    
\caption{Observed (blue bars) and model-predicted (orange hatched bars) mean CBF as a function of problem size $L$ under clique embeddings on the D-Wave \textit{Advantage2\_system1.6} QPU. 
Each subfigures correspond to annealing times $T\in\{5,20,100,200\}\,\mu$s at fixed chain strength $k=1.5$.
}
\label{fig:cbf_obs_pred1}
\end{figure}

Figure~\ref{fig:cbf_obs_pred1} compares observed and predicted mean CBF as a function of $L$ 
for annealing times $T=5,20,100,200\,\mu$s at fixed chain strength $k=1.5$. 
In all cases, the observed CBF curves increase with $L$, consistent with the accumulation of control errors as chain lengths grow. 
At the same time, the absolute level of CBF decreases as the annealing time becomes longer, indicating that longer schedules mitigate chain breaks by allowing more adiabatic evolution. 
Across all schedules, the predicted curves closely follow the observed data, indicating that the embedding-aware noise model remains stable under different annealing times.

\begin{table}[h]
\centering
\caption{Fitted parameters $(\sigma_h,\sigma_c,\kappa)$ and fit error (SSE) for different annealing times $T$ at fixed $k=1.5$.}
\label{tab:fit_params_T}
\begin{tabular}{cccccc}
\hline
$T$ [$\mu$s] & $\sigma_h$ & $\sigma_c$ & $\kappa$ & SSE \\
\hline
5   & 0.065 & 0.015 & 0.50 & $1.1\times 10^{-4}$ \\
20  & 0.070 & 0.015 & 0.55 & $1.5\times 10^{-4}$ \\
100 & 0.070 & 0.005 & 0.55 & $1.7\times 10^{-4}$ \\
200 & 0.050 & 0.010 & 0.40 & $1.9\times 10^{-4}$ \\
\hline
\end{tabular}
\end{table}

Table~\ref{tab:fit_params_T} shows that the fitted parameters vary only modestly with $T$. 
At $T=5\,\mu$s we obtain $\sigma_h=0.065$, $\sigma_c=0.015$, $\kappa=0.50$ with $\text{SSE}=1.1\times 10^{-4}$. 
At $T=20\,\mu$s the fitted values are $\sigma_h=0.070$, $\sigma_c=0.015$, $\kappa=0.55$ with $\text{SSE}=1.5\times 10^{-4}$. 
At $T=100\,\mu$s we find $\sigma_h=0.070$, $\sigma_c=0.005$, $\kappa=0.55$ with $\text{SSE}=1.7\times 10^{-4}$. 
Finally, at $T=200\,\mu$s the fit gives $\sigma_h=0.050$, $\sigma_c=0.010$, $\kappa=0.40$ with $\text{SSE}=1.9\times 10^{-4}$. Here, we denote the fitted $k_{\mathrm{eff}}$ by $\kappa$.
These results show that $(\sigma_h,\sigma_c,\kappa)$ remain within a narrow range across different schedules, confirming that control errors dominate chain breaks while the overall CBF level is modulated by the annealing time. 
This demonstrates that the embedding-aware Gaussian noise model is robust across schedule variations and captures both the scaling with $L$ and the suppression of chain breaks with longer anneals.

\subsection{RQ4: Chain strength sensitivity (vary $k$, fix $T$)}\label{res:RQ4}

\begin{figure}[!h]
    \centering
    \includegraphics[width=0.55\linewidth]{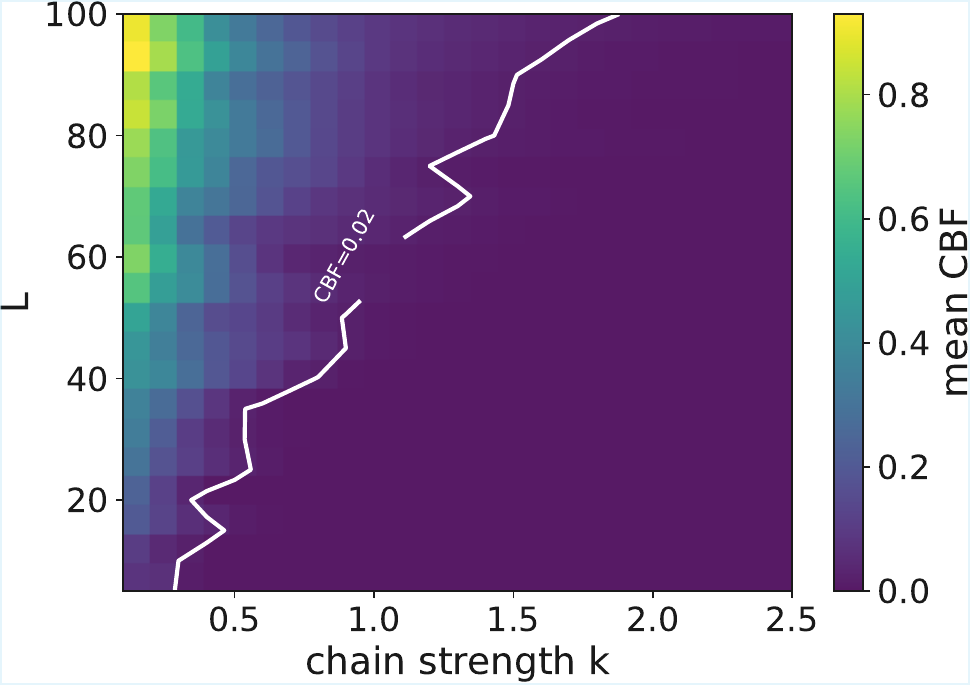}
    \caption{Heatmap of observed mean CBF as a function of problem size $L$ and chain strength $k$ at fixed annealing time $T=20\,\mu$s.}
    \label{fig:cbf_heat}
\end{figure}

Figure~\ref{fig:cbf_heat} presents the observed mean CBF across problem sizes $L$ and chain strengths $k$ at fixed annealing time $T=20\,\mu$s. 
The CBF$=0.02$ contour delineates the empirical stability boundary, $k^\ast(L)$, which increases systematically with $L$. 
This indicates that larger chain strengths are required to maintain chain integrity as the average chain length grows, consistent with the intuition that accumulated control errors scale with the embedding overhead.

\begin{figure}[!h]
\centering
\subfigure[$\tau=0.01$.]{%
    \includegraphics[width=0.32\textwidth]{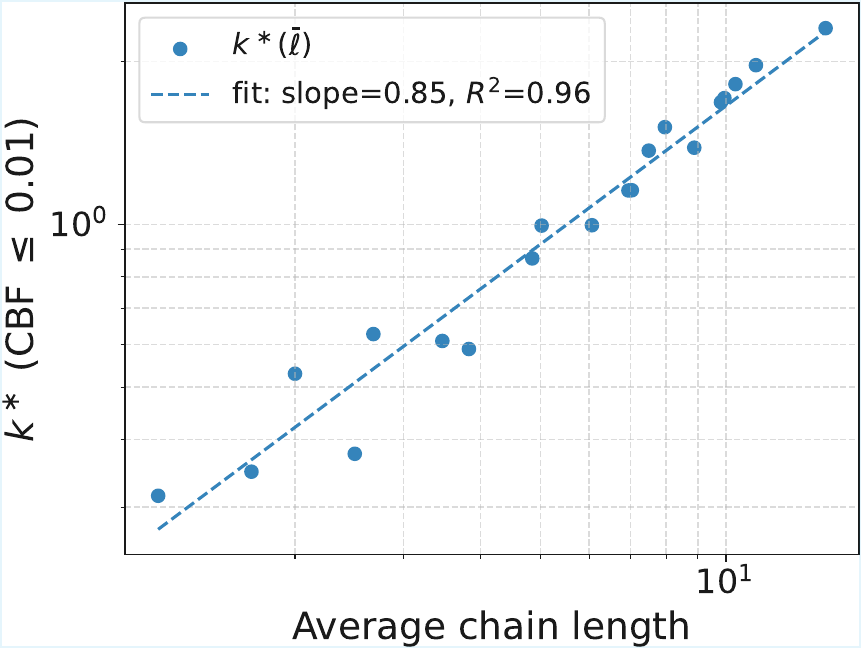}%
    \label{fig.kstar_tau001}}
\subfigure[$\tau=0.02$.]{%
    \includegraphics[width=0.32\textwidth]{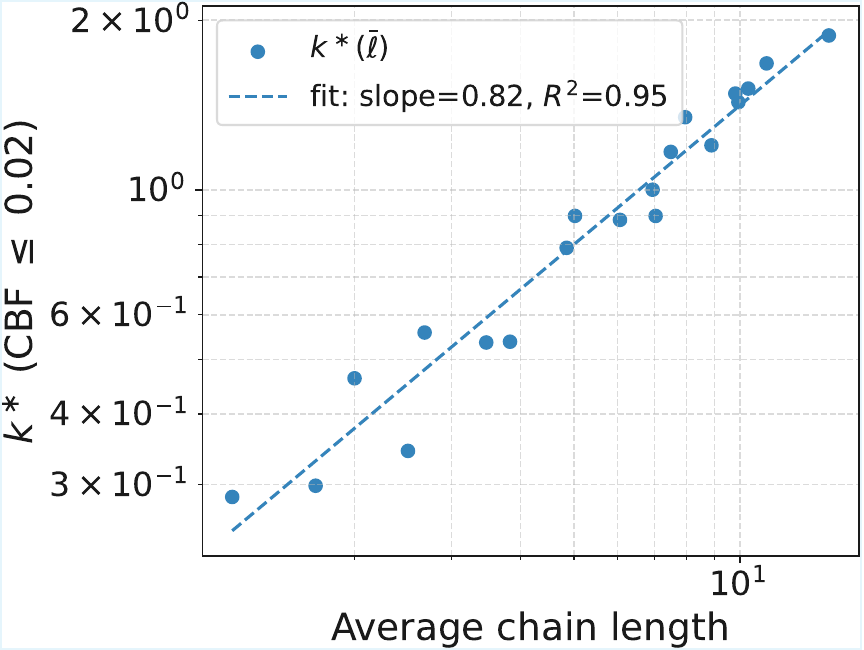}%
    \label{fig.kstar_tau002}}
\subfigure[$\tau=0.05$.]{%
    \includegraphics[width=0.32\textwidth]{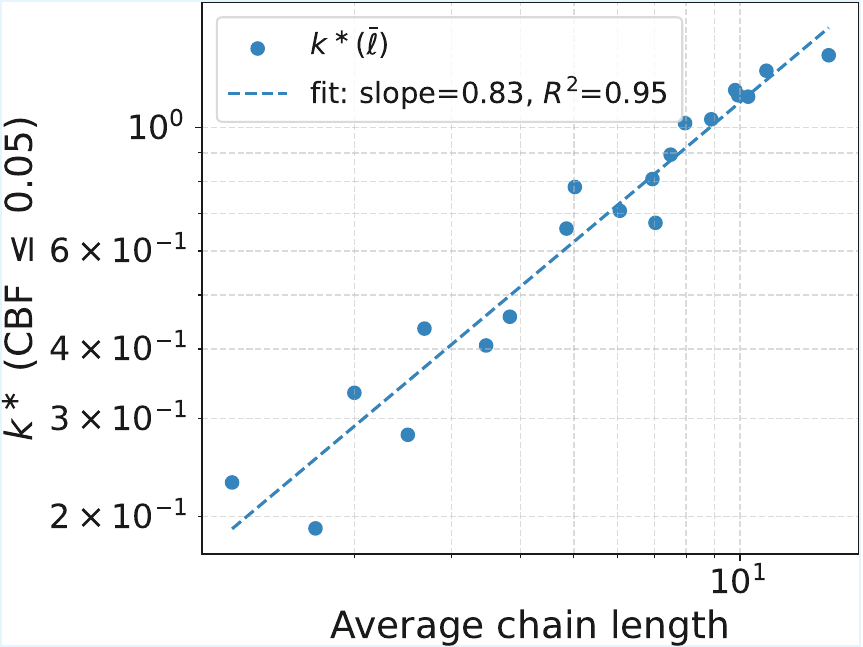}%
    \label{fig.kstar_tau005}}
\caption{Log-log plots of the critical chain strength $k^\ast(\bar{\ell})$ at thresholds $\tau\in\{0.01,0.02,0.05\}$.}
\label{fig:kstar_scaling}
\end{figure}

To quantify this scaling, we extract $k^\ast(\bar{\ell})$ as a function of the average chain length $\bar{\ell}$ induced by clique embeddings (with $L$ logical variables), under thresholds $\tau\in\{0.01,0.02,0.05\}$, and plot the results on log-log axes in Figure~\ref{fig:kstar_scaling}.
In all three cases, $k^\ast(\bar{\ell})$ follows a clear power-law dependence on the average chain length, with fitted exponents $\alpha\simeq 0.82$-$0.85$ and coefficients of determination $R^2\approx 0.95$. 
These exponents are substantially steeper than the theoretical $\alpha=0.5$ predicted by the $\sqrt{\bar{\ell}}$ law under independent Gaussian noise assumptions. 
This deviation indicates that practical noise on the D-Wave \textit{Advantage2\_system1.6} QPU may involve correlations or additional amplification mechanisms that accelerate the growth of the required chain strength with embedding size. 
Nevertheless, the scaling remains sublinear ($\alpha<1$), confirming that chain strength does not need to increase linearly with chain length to ensure stability. 
We defer a more detailed discussion of these deviations, including possible correlated noise sources and their implications for embedding-aware calibration, to Section~\ref{sec:discussion}.

\begin{figure}[!h]
    \centering
    \includegraphics[width=0.55\linewidth]{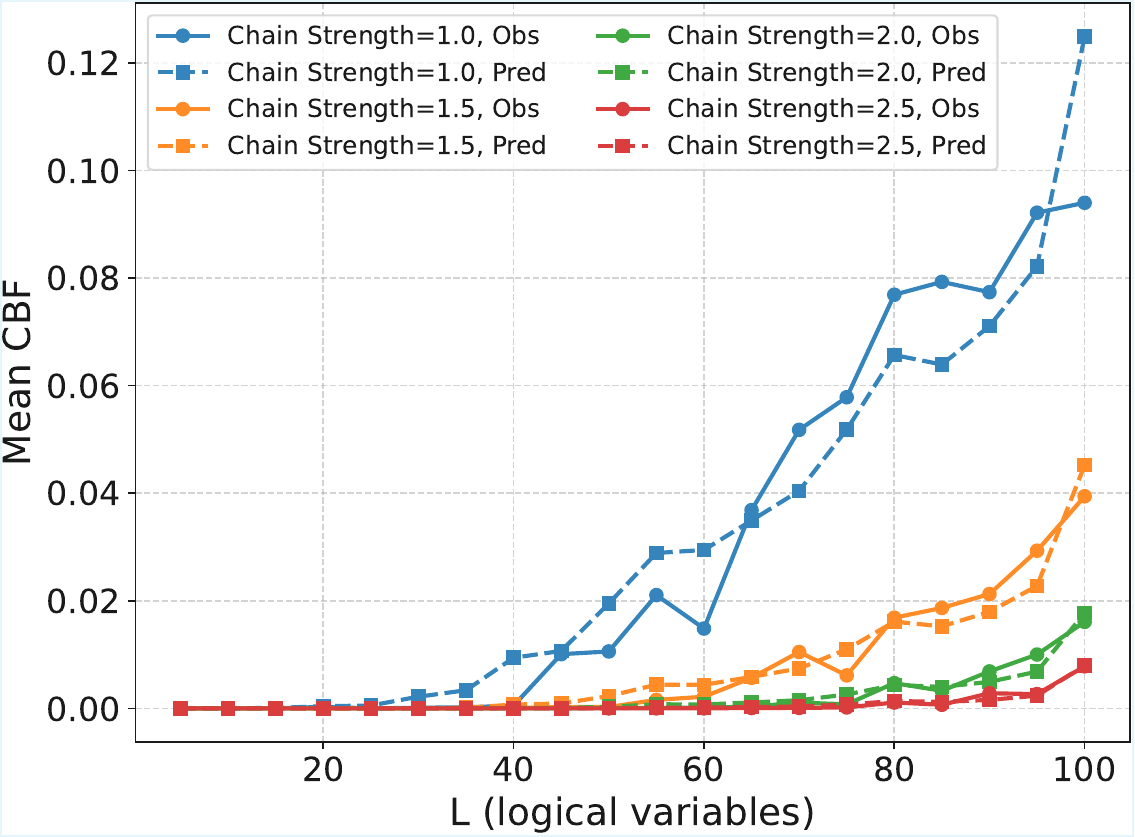}
    \caption{Observed (solid lines) and model-predicted (dashed lines) mean CBF under clique embeddings on the D-Wave \textit{Advantage2\_system1.6} QPU at $T=20\,\mu$s.}
    \label{fig:cbf_obs_pred2}
\end{figure}

Figure~\ref{fig:cbf_obs_pred2} compares observed and predicted mean CBF curves for representative chain strengths $k\in\{1.0,1.5,2.0,2.5\}$. 
As $k$ increases, the observed CBF decreases monotonically across all $L$, confirming that stronger chain strength suppresses chain breaks by enlarging the effective stabilizing margin $k_{\mathrm{eff}}$. 
Concurrently, the model predictions align more closely with experimental data at larger $k$, as reflected in the steadily decreasing SSE values reported in Table~\ref{tab:fit_params_k}. 
This demonstrates that the embedding-aware Gaussian noise model captures not only the dependence of CBF on $L$, but also the modulation introduced by varying chain strength.

\begin{table}[!h]
\centering
\caption{Fitted parameters $(\sigma_h,\sigma_c,\kappa)$ and fit error (SSE) for different chain strengths $k$ at fixed $T=20\,\mu$s.}
\label{tab:fit_params_k}
\scriptsize
\begin{tabular}{ccccc}
\hline
$k$ & $\sigma_h$ & $\sigma_c$ & $\kappa$ & SSE \\
\hline
1.0 & 0.06 & 0.005 & 0.35 & 0.00207 \\
1.5 & 0.07 & 0.015 & 0.55 & 0.00015 \\
2.0 & 0.08 & 0.015 & 0.75 & 0.000022 \\
2.5 & 0.08 & 0.045 & 0.95 & 0.0000024 \\
\hline
\end{tabular}
\end{table}

Overall, these results provide a comprehensive characterization of chain strength sensitivity. 
The empirical scaling of $k^\ast(\bar{\ell})$ establishes a practical design rule: while the theoretical $\sqrt{\bar{\ell}}$ law provides a lower bound, real hardware requires a steeper, yet still sublinear, growth in chain strength. 
Furthermore, the improved concordance between model predictions and experimental data at larger $k$ underscores both the utility and the limitations of the Gaussian noise assumption. 
These findings highlight the necessity of embedding-aware calibration strategies when applying QA to large-scale QUBO instances.

\subsection{RQ5: Algorithmic Comparison with Classical Baselines}

\begin{table}[h]
\centering
\scriptsize
\caption{Comparison of best energy ($E_{\min}$) and computation time ($T_{c}$) across different solvers (SA, PuLP, Gurobi, QA). For QA, the time is computed as read counts multiplied by annealing time.}
\label{tab:energy_time_comparison}
\begin{tabular}{ccccccccc}
\hline
\multirow{2}{*}{L} & \multicolumn{2}{c}{SA}         & \multicolumn{2}{c}{PuLP}       & \multicolumn{2}{c}{Gurobi}     & \multicolumn{2}{c}{QA}         \\
                   & $E_{\min}$ & $T_{c}$ & $E_{\min}$ & $T_{c}$ & $E_{\min}$ & $T_{c}$ & $E_{\min}$ & $T_{c}$ \\ \hline
5   & -0.9622   & 0.10  & -0.9622   & 0.02   &  -0.9622     &  0.00         & -0.9622   & 0.01 \\
10  & -4.1441   & 0.28  & -4.1441   & 0.02   &  -4.1441     &  0.00         & -4.1441   & 0.01 \\
15  & -4.6934   & 0.44  & -4.6934   & 0.42   &  -4.6934     &  0.01         & -4.6934   & 0.01 \\
20  & -21.9189  & 0.57  & -21.9189  & 0.87   &  -21.9189    &  0.02         & -21.9189  & 0.01 \\
25  & -20.2234  & 0.74  & -20.2234  & 2.19   &  -20.2234    &  0.06         & -20.2234  & 0.01 \\
30  & -27.6509  & 0.92  & -27.6509  & 7.32   &  -27.6509    &  0.12         & -27.6509  & 0.01 \\
35  & -44.6379  & 1.11  & -44.6379  & 9.54   &  -44.6379    &  0.25         & -44.6379  & 0.01 \\
40  & -49.8143  & 1.35  & -49.8143  & 132.61 &  -49.8143    &  0.82         & -49.8143  & 0.01 \\
45  & -55.6170  & 1.58  & -55.6170  & 720.72 &  -55.6170    &  0.05         & -55.6170  & 0.01 \\
50  & -70.2638  & 1.80  & -70.2638  & 2788.69&  -70.2638    &  0.06         & -70.2638  & 0.01 \\
55  & -64.1037  & 2.12  & -42.3855  & 3600.32&  -64.1037    &  0.11         & -64.1037  & 0.01 \\
60  & -97.5980  & 2.24  & -69.5739  & 3600.27&  -97.5980    &  0.05         & -97.5980  & 0.01 \\
65  & -111.3206 & 2.44  & -62.6372  & 3600.37&  -111.3206   &  0.13         & -111.3206 & 0.01 \\
70  & -115.4876 & 2.85  & -65.3950  & 3600.31&  -115.4876   &  3600.10        & -115.4876 & 0.01 \\
75  & -102.3171 & 3.24  & -48.8348  & 3600.31&  -102.3171   &  3600.07        & -102.3171 & 0.04 \\
80  & -125.3219 & 3.38  & -51.3503  & 3600.39&  -125.3219   &  3600.10        & -125.3219 & 0.04 \\
85  & -100.4712 & 3.93  & -38.9299  & 3600.32&  -100.4712   &  3600.10        & -100.4712 & 0.20 \\
90  & -146.3845 & 3.99  & -73.3973  & 3600.30&  -146.3845   &  3600.10        & -146.3845 & 0.20 \\
95  & -127.0726 & 4.64  & -46.9583  & 3600.38&  -127.0726   &  3600.07        & -127.0726 & 0.20 \\
100 & -159.2677 & 4.76  & -72.2637  & 3600.51&  -159.2677   &  3600.10        & -159.2677 & 0.20 \\ \hline
\end{tabular}
\end{table}

Table~\ref{tab:energy_time_comparison} compares the best energy values and computation times for SA, PuLP, Gurobi, and QA. 
For QA, the runtime is reported as the product of the number of reads ($2000$) and the programmed annealing time. For SA, the number of reads was fixed at $2000$ across all problem sizes $L$. For PuLP and Gurobi, a computation time limit of $3600$ seconds was imposed.

Across all $L$, the best energy values obtained by QA, SA, and Gurobi remain essentially identical. In contrast, PuLP begins to deviate from the optimum beyond $L = 55$, returning higher-energy solutions due to hitting the $3600$~s time limit. This behavior highlights the limitation of PuLP’s branch-and-bound strategy when applied to dense QUBO formulations, where the solver can terminate with a feasible but suboptimal solution if the search tree cannot be fully explored in time. 

The runtime behavior diverges significantly across solvers. The SA shows a relatively mild growth in runtime because the number of reads is fixed due to our simulation setting, and the heuristic nature of the algorithm keeps the scaling modest even as the QUBO size increases. However, the PuLP and Gurobi quickly hit the $3600$~s limit as $L$ grows, with PuLP failing to return optimal solutions beyond $L = 55$ and Gurobi stalling beyond $L = 70$. Although Gurobi’s reported energies coincide with the reference values up to $L=65$, its solver logs do not certify optimality within the $3600$~s time limit for $L \geq 70$, reflecting the intrinsic hardness of QUBO as NP-hard~\cite{glover2022quantum}. QA exhibits an almost constant runtime of $10$-$200$~ms across all $L$, since execution time is determined solely by the programmed anneal time and the fixed number of reads. 
The best energies obtained from QA match those reported by SA and Gurobi, indicating that QA achieves competitive solution quality within a fixed and predictable computation time.

\section{Discussion}\label{sec:discussion}
The preceding results demonstrate how intrinsic control errors manifest in the observed chain break statistics and how our embedding-aware Gaussian model captures these effects under variations in annealing time and chain strength. 
In this section, we interpret these findings in the broader context of QA hardware and noise processes. 
We first discuss the observed sensitivities with respect to annealing schedules and chain strength, linking them to our theoretical framework. 
We then position our model relative to other physical noise channels that are not explicitly represented by ICE, highlighting directions for more comprehensive noise modeling.
\subsection{Schedule and chain-strength sensitivity}

The combined results of Section~\ref{res:RQ3} and Section~\ref{res:RQ4} provide a unified perspective on how annealing time and chain strength influence chain stability. 
With respect to the annealing schedule, we observe that longer annealing times systematically reduce the absolute level of CBF while leaving the scaling with embedding size unchanged. 
This suggests that extended schedules mitigate break events by enhancing adiabaticity, but the fundamental growth of CBF with chain length is governed primarily by accumulated control errors.

In contrast, varying the chain strength $k$ directly modulates the stabilizing margin. 
Larger $k$ values consistently suppress CBF and simultaneously improve concordance between observed and predicted values, as reflected in the decreasing SSE in Table~\ref{tab:fit_params_k}. 
Correspondingly, the fitted margin parameter $\kappa$ grows nearly linearly with $k$, which we interpret as the empirical realization of the effective chain strength, $k_{\mathrm{eff}} = \eta k$.
Within this framework, the accumulated error $\Delta(\ell_i)$ represents the total control error along a chain corresponding to logical variable $i$, of length $\ell_i$, arising from both local field and coupler perturbations, and is modeled as~\eqref{eq:ref1}.

As established in proof of Theorem~\ref{th:gaussian-tail}, the CBP can be expressed in closed form as a complementary error function of the ratio $k_{\mathrm{eff}}/\sqrt{\mathrm{Var}[\Delta(\ell_i)]}$. 
In the asymptotic regime, $\operatorname{erfc}(x)$ decays as $e^{-x^2}$ up to a polynomial prefactor, so that chain breaks are exponentially suppressed with increasing $k_{\mathrm{eff}}$.
This asymptotic form provides a clear explanation for why increasing $k_{\mathrm{eff}}$ exponentially suppresses chain breaks in practice. 
Moreover, at larger $k$ values, break events are governed almost entirely by this Gaussian-tail mechanism, reducing the influence of unmodeled effects, such as ghost couplings or calibration drift~\cite{yarkoni2022quantum}. 
This accounts for the progressively smaller fitting errors observed in the large-$k$ regime.

Nevertheless, $k$ cannot be increased without bound. 
As shown in Definition~\ref{def:emHamil}, the embedded Hamiltonian assigns the original logical couplers $J_{ij}$ to physical edges between chains, while also introducing chain strength (intra-chain penalty) $k$ to enforce chain consistency. 
If $k$ is chosen excessively large, the intra-chain penalty dominates relative to the logical couplers $J_{ij}$, which encode the intended problem interactions between logical variables. 
As a result, the embedded Hamiltonian underrepresents the original problem Hamiltonian, reducing logical fidelity. 
This trade-off manifests clearly in our experiments: too small $k$ produces unstable chains with high CBF, while too large $k$ stabilizes chains at the expense of faithfully representing logical interactions.
Our empirical results show that the critical chain strength follows a sublinear power-law scaling with chain length, $k^\ast(\ell_i) \propto \ell_i^\alpha$, with fitted exponents $\alpha\simeq 0.82$-$0.85$. 
This is substantially steeper than the theoretical $\alpha=0.5$ predicted by the $\sqrt{\ell_i}$ law under independent Gaussian error assumptions.

This deviation may be explained by the presence of correlated or global noise sources in addition to independent control errors.
To capture this effect, we extend the variance model as
\begin{equation}
\mathrm{Var}[\Delta(\ell_i)] = \ell_i\sigma_h^2 + (\ell_i-1)\sigma_c^2 + \rho\,\ell_i^\gamma,
\end{equation}
where the final term represents a correlated contribution with strength $\rho$ and scaling exponent $\gamma$. 
If this contribution dominates at large $\ell_i$, then
\begin{equation}
k^\ast(\ell_i) \sim \ell_i^{\gamma/2},
\end{equation}
implying an effective scaling exponent $\alpha = \gamma/2$. 
Our fitted $\alpha \approx 0.82$ corresponds to $\gamma \approx 1.64$, consistent with partially correlated error processes that accelerate the growth of required chain strength. 
Such mechanisms may include calibration drift, flux offsets, or Hamiltonian programming noise~\cite{zaborniak2021benchmarking, pelofske2023noise}. 

In this perspective, the ideal $\sqrt{\ell_i}$ law serves as a lower bound achievable under independent Gaussian errors, whereas real hardware exhibits a steeper yet still sublinear scaling due to correlated noise. 
This establishes a practical design rule: chain strength must grow faster than $\sqrt{\ell_i}$ but remains well below linear growth. 
Consequently, there exists a sweet spot in $k$ where chain reliability is maintained without overwhelming the logical couplers, and embedding-aware calibration strategies become essential for achieving robust QA at scale.

\subsection{Relation to other physical noise channels}
Our embedding-aware analysis focuses on ICE, 
which model analog misspecification of fields and couplers through Gaussian perturbations 
$\delta h_i$ and $\delta J_{ij}$. 
In practice, ICE serves as a convenient summary equation that absorbs multiple hardware imperfections into effective noise terms~\cite{yarkoni2022quantum}. 
However, real quantum annealers are also influenced by additional physical channels that are not explicitly captured in our model. 
A complete embedding-aware noise framework will need to incorporate these effects:

\begin{itemize}
    \item Thermal noise:  
    due to finite operating temperature, qubits can undergo thermally activated hops during annealing. 
    These excitations compete with quantum tunneling and can push the system into higher-energy states, 
    particularly near freeze-out when dynamics effectively stop~\cite{raymond2016global, dickson2013thermally}. 
    The influence of thermal noise depends strongly on temperature, minimum gap, and annealing schedule.

    \item Decoherence and environment coupling:
    QA suffers from nonadiabatic transitions and decoherence~\cite{aaberg2005quantum, childs2001robustness}. This decoherence induces unwanted transitions from the ground state to excited states. For the decoherence noise model, the quantum adiabatic Markovian master equation~\cite{albash2012quantum} can be introduced with the total Hamiltonian for the system and its environment: 
    \begin{equation}
    H_{\text{total}}(t)=H_{\text{sys}}(t)+H_{\text{env}}+H_{\text{int}},        
    \end{equation}
    where $H_{\text{sys}}(t)$ denotes the annealing system defined by~\eqref{eq:qaorigin}, $H_{\text{env}}$ represents the environment (often called a {bath} in open quantum systems, e.g., a thermal reservoir or external modes), and $H_{\text{int}}$ is the interaction Hamiltonian coupling the system to this environment. While QA is relatively robust to pure dephasing in the computational basis, decoherence shortens coherence times and alters the effective annealing trajectory, with stronger effects at longer schedules~\cite{boixo2013experimental}.
\end{itemize}

In summary, while ICE provides a compact representation of analog control errors via $\delta h$ and $\delta J$, 
a fully embedding-aware noise model must connect these terms to underlying physical noise processes. 
This would enable predictive models that remain faithful to device physics and lay the groundwork for hardware-aware QA simulators.

\section{Conclusion}\label{sec:conclusion}
This paper has proposed an embedding-aware noise model that extends the ICE framework to explicitly capture how embedding overhead amplifies hardware noise in QA. 
By modeling control errors as Gaussian perturbations on fields and couplers, we derived closed-form scaling laws for variance growth, CBP, and CBF. 
These analytical predictions were experimentally validated on the D-Wave Advantage2 system (Zephyr topology) using clique embeddings of random QUBOs, where the observed consistency between predicted and measured CBF confirmed that embedding length is a primary driver of reliability degradation in current hardware.

Our analyses have further demonstrated that chain strength plays a stabilizing role: the fitted margin parameter $\kappa$ scales nearly linearly with $k$, consistent with the proposed failure criterion. 
This supports the theoretical design rule that chain strength should scale as $\sqrt{\ell_i}$ to maintain constant reliability as embeddings grow, while avoiding the over-suppression of logical couplers. 
At the same time, hardware experiments revealed a steeper but still sublinear scaling exponent ($\alpha\simeq 0.82$-$0.85$), which we attribute to correlated or global noise processes. 
This contrast between the idealized theoretical law and the practical scaling law highlights both the validity of the framework and the importance of considering additional hardware imperfections.

Beyond schedule and chain strength sensitivity, our results have established the generality of the embedding-aware Gaussian noise model. 
Nevertheless, the present analysis isolates only embedding-induced control errors and does not explicitly capture other physical noise channels, such as thermal excitations, decoherence, and system-environment couplings. 
Future work should therefore extend embedding-aware noise modeling to incorporate these processes into the perturbation terms $\delta h$ and $\delta J$, and to consider correlated and non-Gaussian error distributions. 
Such extensions would enable the development of hardware-aware QA simulators that combine analytic embedding laws with realistic physical noise, ultimately providing predictive insight into scalability and informing embedding-aware parameter tuning strategies for next-generation quantum annealers.

\section*{Acknowledgements}
This work was supported in part by Quantum Computing based on Quantum Advantage challenge research(RS-2024-00408613) through the National Research Foundation of Korea(NRF) funded by the Korean government (Ministry of Science and ICT(MSIT)); in part by Creation of the quantum information science R\&D ecosystem(based on human resources) through the National Research Foundation of Korea(NRF) funded by the Korean government (Ministry of Science and ICT(MSIT)) (RS-2023-00256050); in part by the National Research Foundation of Korea(NRF) grant funded by the Korea government(MSIT)(RS-2024-00336962); and in part by the Institute of Information and communications Technology Planning and Evaluation (IITP) under the Artificial Intelligence Convergence Innovation Human Resources Development (IITP-2025-RS-2023-00254177) grant funded by the Korea government(MSIT).  

\bibliographystyle{quantum}
\bibliography{bibliography}

\onecolumn
\appendix

\end{document}